\documentclass[lettersize,journal]{IEEEtran}
\usepackage{amsfonts}
\usepackage{amsmath}
\usepackage{algorithmic}
\usepackage{algorithm}
\usepackage{array}
\usepackage[caption=false,font=normalsize,labelfont=sf,textfont=sf]{subfig}
\usepackage{textcomp}
\usepackage{stfloats}
\usepackage{url}
\usepackage{verbatim}
\usepackage{graphicx}
\usepackage{cite}
\usepackage{pstool}
\usepackage{psfrag}
\usepackage{amsthm}
\usepackage{amssymb}

\newtheorem{theorem}{Theorem}

\newtheorem{lemma}{Lemma}
\newtheorem{remark}{Remark}
\makeatletter
\def\endthebibliography{%
  \def\@noitemerr{\@latex@warning{Empty `thebibliography' environment}}%
  \endlist
}
\makeatother
\hyphenation{op-tical net-works semi-conduc-tor IEEE-Xplore}

\begin{document}

\title{UAV-Mounted IRS (UMI) in the Presence of Hovering Fluctuations: 3D Pattern Characterization and Performance Analysis}

\author{Mohammad Javad Zakavi, Mahtab Mirmohseni,~\IEEEmembership{Senior Member,~IEEE,} Farid Ashtiani,~\IEEEmembership{Senior Member,~IEEE,} Masoumeh Nasiri-Kenari,~\IEEEmembership{Senior Member,~IEEE}

\thanks{Mohammad Javad Zakavi, Farid Ashtiani, and Masoumeh Nasiri-Kenari are with the Department of Electrical Engineering, Sharif University of Technology, Tehran, Iran (e-mail: mohammadjavad.zakavi@ee.sharif.edu, ashtianimt@sharif.edu, mnasiri@sharif.edu).}

\thanks{Mahtab Mirmohseni is with the Institute for Communication Systems, Department of Electrical Engineering, University of Surrey, Guildford, U.K (e-mail: m.mirmohseni@surrey.ac.uk).}}

%

\maketitle

\begin{abstract}
This paper investigates unmanned aerial vehicle (UAV)-mounted intelligent reflecting surfaces (IRS) to leverage the benefits of this technology for future communication networks, such as 6G. Key advantages include enhanced spectral and energy efficiency, expanded network coverage, and flexible deployment. One of the main challenges in employing UAV-mounted IRS (UMI) technology is the random fluctuations of hovering UAVs. Focusing on this challenge, this paper explores the capabilities of UMI with passive/active elements affected by UAV fluctuations in both horizontal and vertical angles, considering the three-dimensional (3D) radiation pattern of the IRS. The relationship between UAV fluctuations and IRS pattern is investigated by taking into account the random angular vibrations of UAVs. A tractable and closed-form distribution function for the IRS pattern is derived, using linear approximation and by dividing it into several sectors. In addition, closed-form expressions for outage probability (OP) are obtained using the central limit theorem (CLT) and the Gamma approximation. The theoretical expressions are validated through Monte Carlo simulations. Our findings indicate that the random fluctuations of hovering UAVs have a notable impact on the performance of UMI systems. To avoid link interruptions due to UAV instability, IRS should utilize fewer elements, even though this leads to a decrease in directivity. As a result, unlike terrestrial IRS, incorporating more elements into aerial IRS systems does not necessarily improve performance due to the fluctuations in UAVs. Numerical results show that the OP can be minimized by selecting the optimal number of IRS elements and using active elements.
\end{abstract}

\begin{IEEEkeywords}
Intelligent reflecting surfaces (IRS), IRS 3D pattern, unmanned aerial vehicle (UAV), UAV fluctuations, UAV-mounted IRS (UMI).
\end{IEEEkeywords}

\section{Introduction}
\IEEEPARstart{T}{he} intelligent reflecting surface (IRS) is a promising technology suggested for future communication networks. In recent years, this technology has gained significant attention due to its capability to manage the wireless environment between transceivers \cite{basharat2021reconfigurable,pan2022overview}. Specifically, an IRS consists of numerous passive/active elements that act as smart reflectors, reflecting the incoming signals in the desired direction by adjusting their phase shifts with a programmable controller. As a result, the signals reflected by individual IRS elements can be combined with multipath signals, thereby amplifying the received signal strength and reducing interference. This process contributes to a notable improvement in spectral efficiency (SE) \cite{zhu2022intelligent}. Unlike conventional active relays like decode-and-forward (DF) and amplify-and-forward (AF) relays, the passive IRS is composed of passive components and does not require RF chains. Furthermore, IRS reflects incoming signals with minimal power usage, leading to low energy consumption \cite{bjornson2020intelligent}. Because of their straightforward installation process and flexible shape, IRS can be installed on building facades or directly onto drones to help terrestrial user equipment (UEs) with poor connectivity \cite{shakhatreh2024mobile,pogaku2022uav}. The features mentioned above make IRS not only a promising solution for green networks but also a valuable complement to other existing technologies. However, the performance of passive IRS-assisted systems may be restricted by significant product-distance path loss. One potential solution to this challenge involves increasing the number of passive reflecting elements to leverage the square-order beamforming gain. Alternatively, deploying a passive IRS in proximity to the transmitter and/or receiver can effectively mitigate path loss \cite{wu2021intelligent}. On the other hand, a novel variant of IRS, known as active IRS, has recently been introduced in \cite{kang2024active,zhang2023active,zhi2022active,long2021active}. This approach addresses the drawbacks of passive IRS by employing low-cost components to amplify the reflected signals. Typically, an active IRS comprises several active reflecting elements that integrate negative resistance components like tunnel diodes and negative impedance converters. This allows the IRS to reflect incoming signals while boosting its power \cite{lonvcar2020challenges}. In contrast to AF relays, which depend on power-hungry RF chains, active IRS reflects signals directly through low-power reflection-type amplifiers.
\IEEEpubidadjcol

In recent years, unmanned aerial vehicles (UAVs) have become increasingly popular for applications such as surveillance, tracking, remote sensing, and disaster communication, owing to their cost-efficiency, autonomous functionality, and straightforward deployment. In cellular networks, UAVs are particularly valuable as they can act as aerial relays or flying base stations, extending coverage and improving connectivity in areas with obstructions or high congestion \cite{geraci2022will}. Nevertheless, their adoption in high-throughput communication systems is hindered by limitations related to size, weight, and power \cite{zeng2019accessing}. The combination of IRS technology with UAVs has emerged as a viable approach to enhance both SE and energy efficiency (EE), generating considerable research attention in this domain \cite{zhang2022irs,zhao2023energy,ji2022trajectory,wei2020sum,cai2022resource,su2022spectrum,qian2023joint,zhao2024energy,cheng2024aerial,ge2023active,adam2023intelligent}. In the next subsection, we provide a comprehensive review of related works, focusing on the integration of UAVs and IRS technology.

\subsection{Related Works}
A downlink communication system employing an IRS-assisted UAV is discussed in \cite{zhang2022irs}, where the UAV dynamically establishes a cascaded link through the IRS to improve signal quality for multiple users. The study focuses on maximizing the sum rate for all users by optimizing resource allocation, IRS phase shifts, and UAV trajectory using the block coordinate descent method. Key constraints, including transmit power, flight speed, service area, and IRS reflection capabilities, are taken into account. Additionally, \cite{zhao2023energy} explores the integration of IRS with UAVs to achieve energy-efficient communication. The optimization of UAV trajectory, power allocation, and IRS phase shifts is based on statistical channel state information (CSI) to maximize EE. In \cite{ji2022trajectory}, the authors investigate a UAV communication system enhanced by an IRS, concentrating on data transmission from a ground node to the UAV in a jamming environment. To maximize the average communication rate, they propose a joint optimization framework for the ground node's transmit power, IRS passive beamforming, and UAV trajectory. In \cite{wei2020sum}, the research aims to maximize the sum rate of a multi-user IRS-assisted UAV system utilizing orthogonal frequency division multiple access (OFDMA), while satisfying the quality of service (QoS) requirements for all users. The authors in \cite{cai2022resource} consider the use of non-orthogonal multiple access (NOMA) in an IRS-supported UAV network to efficiently serve multiple ground users. In that work, the optimization of resource allocation, 3D UAV trajectory, and IRS phase control are used to minimize the average total energy consumption. These studies share a common feature: they utilize a separate IRS and UAV setup, where the UAV acts as an aerial base station, and the IRS is positioned independently, such as on buildings or near the UAV/UE.

Beyond separate IRS-UAV systems, recent research has introduced UAV-mounted IRS (UMI) configurations, where the IRS is attached to or carried by the UAV. The key advantage of the UMI is its significantly lower power consumption compared to active relays. This is because the UMI reflects signals without needing power-hungry components, requiring only minimal energy for its phase-shift controller \cite{shafique2020optimization,hashida2020intelligent}. While this power draw is small relative to the UAV's propulsion, it is still a factor; thus, mission time can be extended using energy-efficient control algorithms and renewable energy sources like solar panels \cite{liu2022intelligent}. In \cite{su2022spectrum},  a communication scheme is proposed where an IRS is mounted on a UAV to facilitate connections between a base station (BS) and ground users. This study introduces methods to enhance both SE and EE by jointly optimizing active beamforming, passive beamforming, and UAV trajectory. In \cite{qian2023joint}, a UAV carrying an IRS is proposed as a relay node to support covert communication systems (CCS). To ensure covertness, the study calculates the minimum error detection probability at the eavesdropper and optimizes the UAV trajectory and IRS phase shifts to maximize the average communication rate. In \cite{zhao2024energy}, the authors explore the security of simultaneous wireless information and power transfer (SWIPT) systems supported by IRS and UAVs, taking into account the energy consumption of rotary-wing UAVs during flight. In this setup, an IRS mounted on a UAV improves the quality of legitimate transmissions, while artificial noise generated at the base station (BS) is used to disrupt eavesdropping attempts. Ground devices (GDs) utilize power splitting (PS) technology to concurrently decode information and harvest energy. The study focuses on optimizing BS transmit beamforming, UAV-IRS phase shifts, trajectory/velocity, and GD PS ratios to maximize the overall secrecy rate for all GDs.

Recently, UAV jitter in the presence of IRS is studied in \cite{cheng2024aerial,ge2023active,adam2023intelligent}. The authors in \cite{cheng2024aerial} analyze the use of a multi-aerial IRS in a secure SWIPT system, considering UAV jitter. Angle estimation errors caused by UAV jitter are converted into CSI errors using linear approximation techniques. Subsequently, a joint optimization problem is formulated to maximize the average secrecy rate, incorporating the beamforming vector, IRS phase shift matrices, and UAV trajectories. Random airflow and fuselage vibrations can greatly affect the communication capabilities of UAVs. In \cite{ge2023active}, a novel active IRS is introduced to address this issue, enabling a secure and energy-efficient beamforming design. The proposed framework accounts for the effects of UAV jitter and involves the joint optimization of the active IRS reflection coefficient, UAV-based BS beamforming, and UAV trajectory, while ensuring compliance with worst-case secrecy rate constraints. In \cite{adam2023intelligent}, the authors investigate a UAV-assisted multi-user IRS communication system, aiming to minimize power consumption through the joint optimization of active beamforming, passive beamforming, and UAV trajectories. The study takes into account practical constraints such as UAV jitters and imperfections in hardware components.

To fully leverage the advantages of a UMI in operation, it is crucial to characterize the 3D IRS pattern, determine its distribution, and evaluate the effect of the UAV's random fluctuations. UMI systems experience misalignment between transceivers due to UAV fluctuations, leading to a decrease in the signal-to-noise ratio (SNR) at the receiver, ultimately compromising system reliability. Therefore, optimizing the 3D IRS pattern is essential for maintaining a stable connection during UAV fluctuations. This optimization requires carefully balancing the trade-off between increasing the directivity by increasing the number of IRS elements to counteract path loss on the one hand, and decreasing it to mitigate the impacts of UAV fluctuations on the other hand.

\subsection{Our Contribution}
While prior studies have investigated IRS-assisted UAV communications, the impact of UAV hovering fluctuations remains largely unaddressed \cite{zhang2022irs,zhao2023energy,ji2022trajectory,wei2020sum,cai2022resource,su2022spectrum,qian2023joint,zhao2024energy}. Although recent works have begun to explore UAV jitter in IRS-assisted scenarios \cite{cheng2024aerial,ge2023active,adam2023intelligent}, critical gaps persist. For instance, some prior studies formulate the problem as an optimization of beamforming, IRS phase shifts, and other parameters, yet fail to provide closed-form derivations. Moreover, while the number of IRS elements should be optimized under UAV fluctuations, existing work typically assumes a fixed number and lacks analytical derivations to determine the optimal configuration. To address these limitations, this work provides a comprehensive characterization of the 3D IRS pattern under UAV hovering fluctuations, analyzes its statistical distribution, and derives the optimal number of elements for both passive and active IRS architectures. By incorporating both IRS types and explicitly modeling UAV fluctuation effects, our analysis provides new theoretical insights into UMI system design. Due to the UAV's instability, we assess how its horizontal and vertical angle fluctuations affect system performance. The main contribution of this paper is the development of a 3D IRS pattern, along with the determination of its distribution under UAV fluctuations, and the analysis of the outage probability (OP) of the UMI. By leveraging the derived analytical expressions, we minimize the OP under different levels of UAV fluctuations. This is achieved by selecting an optimal number of IRS elements, which ultimately enhances system reliability. The main contributions of this work are outlined as follows\\
1) We address the issue of instability in real-world UAV platforms, which are susceptible to airflow disturbances and airframe vibrations, leading to horizontal and vertical angular fluctuations. To conduct performance analysis, it is necessary to have the distribution of the IRS 3D pattern. Therefore, we characterize the 3D pattern of the IRS and determine its distribution while considering UAV fluctuations. Note that this distribution is applicable in all systems involving UMI for performance analysis under UAV fluctuations.\\
2) To determine the IRS 3D pattern distribution, which is a product of the array factor and single radiation pattern, we follow a three-step analytical approach. First, we obtain the distribution of the array factor, then we treat the single radiation pattern as deterministic. Finally, we derive the tractable and closed-form probability distribution function (PDF) of the IRS 3D pattern by utilizing the results of the previous two steps. To simplify the analysis, we apply linear approximations to address the error variations in elevation and azimuth angles caused by UAV fluctuations. Furthermore, we introduce a sectoral model for the array factor and single radiation pattern, resulting in a more straightforward approach for determining the distribution of the IRS 3D pattern.\\
3) Using the derived PDF of the IRS 3D pattern, we establish closed-form expressions for the OP of the UMI in two scenarios: Passive-IRS and Active-IRS. For this analysis, we apply the central limit theorem (CLT) and Gamma approximation techniques.\\
4) Finally, the accuracy of the derived analytical expressions is validated through Monte Carlo simulations. The results demonstrate that system performance is heavily influenced by UAV fluctuations, with a notable degradation compared to stable conditions. By using our analytical approach, we are able to determine the optimal number of IRS elements under varying levels of UAV fluctuations, enabling the minimization of OP without requiring extensive simulations.

\subsection{Outline and Notations}
The paper is structured as follows: Section II presents the system model and the 3D IRS pattern considering UAV fluctuations. Section III derives the IRS radiation pattern distribution function. Section IV provides a detailed performance analysis. Section V includes simulation results to validate the analytical ones and examines link performance and IRS pattern optimization. Finally, Section VI concludes the paper.

\emph{Notation}: Throughout the paper, matrices and column vectors are denoted by bold uppercase and bold lowercase letters, respectively. The transpose and Hermitian transpose of a vector or matrix are indicated by $(.)^T$ and $(.)^H$, respectively. $\mathbf{0}_{N}$ denotes the zero vector of size $N$, where every component is $0$. The $N\times N$ identity matrix is expressed by $\mathbf{I}_N$. The Hadamard product and Kronecker product of two vectors are denoted by $\odot$ and $\otimes$, respectively. A real and complex normal distribution with mean $\mu$ and variance $\sigma^2$ is expressed as $\mathcal{N}(\mu,\sigma^2)$ and $\mathcal{CN}(\mu,\sigma^2)$, respectively. The Q-function is denoted by $Q(.)$, while the Dirac delta function is represented by $\delta(.)$. Finally, the expectation operation is expressed as $\mathbb{E}(.)$.

\section{System Model}
A communication system utilizing UMI is illustrated in Fig.~\ref{fig1}. This system involves a BS equipped with $M$ antennas and an IRS with $N$ reflective elements, serving a single-antenna UE. The direct link between BS and UE is assumed to be obstructed.
\begin{figure}[!t]
\centering
\includegraphics{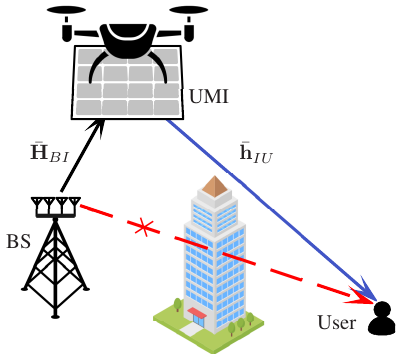}
\caption{The UMI-assisted communication system.}
\label{fig1}
\end{figure}
Without loss of generality, a 3D Cartesian coordinate system is adopted. The positions of the BS, IRS, and UE are defined as $\mathbf{u}_{\footnotesize\text{BS}}=[x_{\footnotesize\text{BS}}, y_{\footnotesize\text{BS}}, h_{\footnotesize\text{BS}}]$, $\mathbf{u}_{\footnotesize\text{IRS}}=[x_{\footnotesize\text{IRS}}, y_{\footnotesize\text{IRS}}, h_{\footnotesize\text{IRS}}]$, and $\mathbf{u}_{\footnotesize\text{UE}}=[x_{\footnotesize\text{UE}}, y_{\footnotesize\text{UE}}, 0]$, respectively. The transmitted signal to IRS and reflected signal from IRS are represented with the subscript $q\in\{t, r\}$ where $t$ and $r$ indicate the transmitted signal and the reflected signal, respectively. As shown in Fig.~\ref{fig2}, $\theta_q$ and $\phi_q$ denote the elevation and azimuth angles from the IRS in the direction of the BS/UE. Assume $\varepsilon_{x}$ and $\varepsilon_{y}$ are the angular variations of UAV along the $x-z$ and $y-z$ axes, respectively. According to the CLT, the fluctuations in the orientations of the UAV have a Gaussian distribution \cite{dabiri20203d,wang2022jittering,dabiri2020analytical}. That is, we have $\varepsilon_{x} \sim \mathcal{N}(\mu_{x},\sigma^2_{x})$, and $\varepsilon_{y} \sim \mathcal{N}(\mu_{y},\sigma^2_{y})$.
\begin{figure}[!t]
\centering
\includegraphics[width=3.4in]{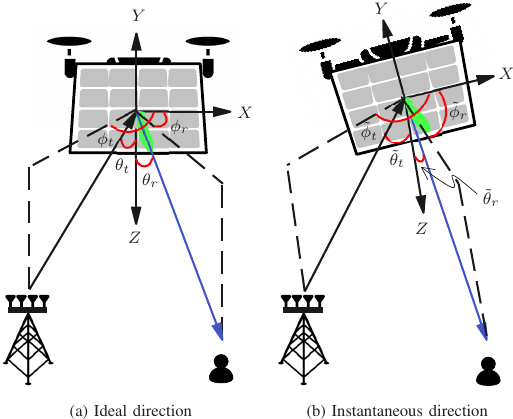}
\caption{A visual depiction of fluctuations in UMI. In this case, the elevation and azimuth without fluctuations are indicated as $\phi_q$ and $\theta_q$, respectively. In contrast, the elevation and azimuth with fluctuations are represented as $\tilde{\phi}_q$ and $\tilde{\phi}_q$, respectively.}
\label{fig2}
\end{figure}
\subsection{Channel Model}
In this paper, we assume a far-field channel model to analyze the impact of UAV fluctuations. Our results demonstrate that these fluctuations inherently constrain the practical optimal size of an IRS. This effective size limitation, combined with the physical mounting constraints on a UAV, justifies the far-field assumption. The systems with extremely large IRS arrays would operate in the near-field, as discussed in \cite{zhi2024performance}; however, our work identifies that UAV fluctuations often preclude such massive scaling in practice. Moreover, we assume the IRS is configured as a uniform square array with $N=N_q \times N_q$ elements. To ensure independence between IRS elements, the spacing between elements in both $x-$ and $y-$directions, denoted as $d_x$ and $d_y$, respectively, is set to half of the wavelength, $\lambda/2$.
The physical dimensions of the IRS are dictated by the operating wavelength and the array configuration. Each meta-atom, with a typical size ($D$) between $\lambda/10$ and $\lambda/5$ \cite{liaskos2018new}, is spaced at $d=\lambda/2$. Consequently, the total aperture size for an $N_q \times N_q$ array is approximately $L \times L$, where $L = (N_q - 1)d + D$. For instance, a $20\times20$ array exhibits a compact form factor of roughly $30\ \text{cm} \times 30\ \text{cm}$ at $10$ GHz and $10\ \text{cm} \times 10\ \text{cm}$ at $30$ GHz. Therefore, a UAV can feasibly carry a moderately-sized IRS comprising a substantial number of elements.

By applying the basic trigonometric formulas, the angles of elevation and azimuth from the IRS towards the direction of the BS/UE without any UAV fluctuations, denoted as $\theta_q$ and $\phi_q$ are obtained as
\begin{subequations}
\begin{align}
&\theta_q = \arctan(\sqrt{\tan^2(\theta_{qx}) + \tan^2(\theta_{qy})}),\\
&\phi_q = \arctan(\frac{\tan(\theta_{qy})}{\tan(\theta_{qx})}),
\end{align}
\label{eq.1}
\end{subequations}
where the directions of the IRS pattern without fluctuations are represented by $\theta_{qx}$ and $\theta_{qy}$ in the $x-z$ and $y-z$ Cartesian coordinates, respectively. On the other hand, the elevation and azimuth angles under UAV fluctuations, i.e., $\tilde{\theta}_q$ and $\tilde{\phi}_q$, are calculated as
\begin{subequations}
\begin{align}
&\tilde{\theta}_q(\varepsilon_x,\varepsilon_y) = \tan^{-1}(\sqrt{\tan^2(\theta_{qx}+\varepsilon_x) + \tan^2(\theta_{qy}+\varepsilon_y)}),\\
&\tilde{\phi}_q(\varepsilon_x,\varepsilon_y) = \tan^{-1}(\frac{\tan(\theta_{qy}+\varepsilon_y)}{\tan(\theta_{qx}+\varepsilon_x)}).
\end{align}
\label{eq.2}
\end{subequations}
where depend on variables $\varepsilon_x$ and $\varepsilon_y$.

Since UMI is usually at altitudes high enough to establish LoS links with ground devices, it experiences low small-scale fading. Consequently, all channels--between BS and IRS, as well as between IRS and UE--are modeled as the Rician distribution \cite{li2023irs}, denoted by $\mathbf{H}_{BI}\in \mathbb{C}^{N\times M}$ and $\mathbf{h}_{IU}\in \mathbb{C}^{N\times 1}$, respectively. The array response of the BS is expressed as $\mathbf{a}_{B}(\theta,\phi)=[1,e^{-j\frac{2\pi}{\lambda}d\sin(\theta)\cos(\phi)},..., e^{-j\frac{2\pi}{\lambda}d(M-1)\sin(\theta)\cos(\phi)}]$ where $d$ is the adjacent antenna distance. The angles of elevation and azimuth from BS towards the direction of the IRS are denoted by $\theta_0$ and $\phi_0$, respectively.
The array response of the IRS is given by
\begin{equation}
\mathbf{a}_I(\theta,\phi) = [\mathbf{a}_{Ix}(\theta,\phi)\otimes\mathbf{a}_{Iy}(\theta,\phi)]^T.
\end{equation}
where
\begin{subequations}
\small
\begin{align}
&\mathbf{a}_{Ix}(\theta,\phi)=[1,e^{-j\frac{2\pi}{\lambda}d_x\sin(\theta)\cos(\phi)},..., e^{-j\frac{2\pi}{\lambda}d_x(N_q-1)\sin(\theta)\cos(\phi)}],\\
&\mathbf{a}_{Iy}(\theta,\phi)=[1,e^{-j\frac{2\pi}{\lambda}dy\sin(\theta)\sin(\phi)},..., e^{-j\frac{2\pi}{\lambda}d_y(N_q-1)\sin(\theta)\sin(\phi)}].
\end{align}
\end{subequations}
Therefore, the channel of the BS-IRS and IRS-UE can be respectively given by
\begin{subequations}
\begin{align}
&\bar{\mathbf{H}}_{BI} = \sqrt{\beta_0E(\zeta_t)}(\mathbf{H}_{BI}\odot (\mathbf{a}_I(\zeta_t)\mathbf{a}_B(\theta_0,\phi_0))),\\
&\bar{\mathbf{h}}_{IU} = \sqrt{\beta_1E(\zeta_r)}(\mathbf{h}_{IU}\odot \mathbf{a}_I(\zeta_r)),
\end{align}
\end{subequations}
where, $\zeta_t = \{\tilde{\theta}_t, \tilde{\phi}_t\}$, $\zeta_r = \{\tilde{\theta}_r,\tilde{\phi}_r\}$ and $E(\zeta)$ is calculated for $\zeta=\{\theta,\phi\}$ as
\begin{equation}
E(\zeta) =
\begin{cases}
  \cos^3(\theta), & \theta \in [0,\pi/2], \phi \in [0,2\pi] \\
  0, & \theta \in (\pi/2,\pi], \phi \in [0,2\pi].
\end{cases}
\label{eq.6c}
\end{equation}
$\beta_0$ and $\beta_1$ are defined as $\beta_k\triangleq c_0{d_k}^{-\alpha_k},\quad k\in\{0,1\}$ where $c_0$ represents the reference path loss at a distance of $1$m. The path loss exponents for BS-IRS and IRS-UE links are denoted by $\alpha_0$ and $\alpha_1$, respectively, whereas $d_0$ and $d_1$ represent the distance from the IRS to the BS and UE, respectively. The Rician factors of $\mathbf{H}_{BI}$ and $\mathbf{h}_{IU}$ are $K_0$ and $K_1$, respectively. We assume the small-scale fading is constant over a channel coherence interval (CCI), which is determined by both multi-path fading and UAV fluctuations. Since UAV fluctuations (e.g., wind or mechanical vibrations) occur on a much larger time scale, their variation within a single CCI is negligible.

\subsection{Received Signal Model}
The reflection matrix of the active IRS is defined as $\mathbf{\Theta}\triangleq\mathbf{A}\mathbf{\Phi}$, where $\mathbf{\Phi}\triangleq \text{diag}(e^{j\phi_1}, ... ,e^{j\phi_N})$ represents the IRS phase shift matrix with $\phi_n$ being the phase shift at the $n$-th element, $n\in \{1, 2, ..., N\}$. Additionally, the active IRS amplification matrix is denoted by $\mathbf{A} \triangleq \text{diag}(a_1, ...,a_N)$ where $a_n$ is the amplification factor of the $n$-th element. Unlike passive IRS, active IRS introduces non-negligible thermal noise at each reflector, denoted as $\mathbf{n}_F \in \mathbb{C}^{N\times1}$, which is typically modeled as $\mathbf{n}_F \sim \mathcal{N}(\mathbf{0}_N, \sigma_f^2\mathbf{I}_N)$, where $\sigma_f^2$ is the amplification noise power \cite{long2021active}. Finally, the received signal at the UE is expressed as
\begin{equation}
y = {\bar{\mathbf{h}}}^{H}_{IU}\mathbf{\Theta} {\bar{\mathbf{H}}}_{BI}\mathbf{w}s + {\bar{\mathbf{h}}}^{H}_{IU}\mathbf{\Theta} \mathbf{n}_F + n,
\end{equation}
where $n\sim\mathcal{N}(0,\sigma_n^2)$ is the additive white Gaussian noise (AWGN) at the UE. The transmit signal from BS is represented by $s$ with transmit power $P_t$. The beamforming vector is denoted by $\mathbf{w} \in \mathbb{C}^{M\times1}$. The power constraint at active IRS is expressed as follows \cite{you2021wireless}
\begin{equation}
P_t\|\mathbf{A\Phi}{\bar{\mathbf{H}}}_{BI}\mathbf{w}\|^2+\|\mathbf{A\Phi} \mathbf{I}_N\|^2 \sigma_f^2 \leq P_F,
\end{equation}
where $P_F$ is the maximum amplification power of the active IRS.

Since the CSI is sensitive to unforeseen feedback delays and IRS effects, the resulting residual CSI error can significantly degrade system performance \cite{hong2020robust,zhang2021robust,chen2022robust}. Given this imperfect CSI at the transmitter, denoted as $\mathbf{h}$, its relationship with the perfect CSI can be modeled as follows \cite{chen2022robust}
\begin{equation}
\hat{\mathbf{h}}_k = \sqrt{1-\zeta}\mathbf{h} + \sqrt{\zeta}\mathbf{h}_e,
\end{equation}
where $\mathbf{h}^H ={\bar{\mathbf{h}}}^{H}_{IU}\mathbf{\Theta} {\bar{\mathbf{H}}}_{BI}$ denotes the perfect cascaded channel, and $\mathbf{h}_e$ is the residual CSI error with zero-mean complex Gaussian entries of variance $\sigma_e^2=E\{|{h_m}|^2\}$, and is independent of $\mathbf{h}$. The CSI error factor $\zeta$, which ranges from $0$ to $1$, indicates the accuracy of the CSI. Accounting for this imperfection, the received signal at the UE is written as
\begin{equation}
\hat{y} = \sqrt{1-\zeta}\mathbf{h}^H\mathbf{w}s + \sqrt{\zeta}\mathbf{h}^H_e\mathbf{w}s + {\bar{\mathbf{h}}}^{H}_{IU}\mathbf{\Theta} \mathbf{n}_F + n.
\end{equation}
To provide a conservative performance bound ensuring reliable system design, we model the CSI error effect by incorporating its variance in the effective noise power $\sigma_e^2$. Therefore, the instantaneous SNR at the UE can be expressed as
\begin{equation}
\gamma = \frac{P_t(1-\zeta)|{\bar{\mathbf{h}}}^{H}_{IU}\mathbf{A\Phi}{\bar{\mathbf{H}}}_{BI}\mathbf{w}|^2}{ \|\bar{\mathbf{h}}^{H}_{IU}\mathbf{A\Phi}\|^2\sigma_f^2 + P_t\zeta \|\mathbf{w}\|^2\sigma_e^2 + \sigma_n^2}.
\label{eq.11}
\end{equation}
As discussed in \cite{yao2023universal} and \cite{wang2020intelligent}, we can assume a strong LoS component and negligible NLoS components for the BS-IRS link. This is due to the strong directivity of the multiple antennas at the BS and the low distance between the BS and the IRS. Based on the discussion in \cite{you2021wireless}, for simplicity, we assume that each reflecting element has the same amplification factor $A$. To achieve maximum SNR, $\mathbf{w}^*$ is determined by the maximum ratio transmission (MRT) principle as \cite{yao2023universal}
\begin{equation}
\mathbf{w}^*=\frac{(\bar{\mathbf{h}}^{H}_{IU}\mathbf{A}^*\mathbf{\Phi}^* \bar{\mathbf{H}}_{BI})^H}{\|\bar{\mathbf{h}}^{H}_{IU}\mathbf{A}^*\mathbf{\Phi}^* \bar{\mathbf{H}}_{BI}\|},
\end{equation}
which satisfies the unit power constraint $\|\mathbf{w}^*\|^2=1$. The optimal active-IRS reflection design is
\begin{equation}
[\mathbf{\Phi}^*]_n=e^{-j(\angle[{\mathbf{H}}_{BI}]_{m,n}+\angle[{\mathbf{h}}_{IU}^H]_n+[\mathbf{a}_I(\theta_t,\phi_t)]_n+[\mathbf{a}_I(\theta_r,\phi_r)]_n)},~\forall n,
\label{eq.70}
\end{equation}
\begin{equation}
(A^*)^2=\frac{P_F}{P_t\beta_0 E(\zeta_t)\sum_{n=1}^{N}|H_{BI,m,n}|^2+N\sigma_f^2}.
\label{eq.71}
\end{equation}
Hence, the instantaneous maximum SNR can be expressed as
\begin{equation}
\gamma_{\text{max}} = \frac{P_tM(1-\zeta)\beta_0\beta_1(A^*)^2\mathbb{G}(\sum_{n=1}^{N}|H_{BI,m,n}||h_{IU,n}|)^2}{\beta_1(A^*)^2E(\zeta_r)\sigma_f^2\sum_{n=1}^{N}|h_{IU,n}|^2 + P_t\zeta \sigma_e^2 + \sigma_n^2},
\label{eq.12c}
\end{equation}where $\mathbb{G}$ describes the array radiation pattern of the IRS towards angles $\tilde{\theta}_t$ and $\tilde{\phi}_t$ from IRS to BS, along with $\tilde{\theta}_r$ and $\tilde{\phi}_r$ from IRS to UE, scaling the received power at the user, defined as \cite{tang2021wireless}
\begin{equation}
\mathbb{G}(N_q,\zeta_t,\zeta_r) = \mathbb{G}_e(\zeta_t,\zeta_r) \times \mathbb{G}_a(N_q,\zeta_t,\zeta_r),
\label{eq.3}
\end{equation}
where $\zeta_t = \{\tilde{\theta}_t, \tilde{\phi}_t\}$, $\zeta_r = \{\tilde{\theta}_r,\tilde{\phi}_r\}$. Here, $\mathbb{G}_a(.)$ is the normalized array factor and $\mathbb{G}_e(.)$ is the single-element radiation pattern. Because of UAV fluctuations, it is deduced from (\ref{eq.3}) that the normalized radiation pattern of IRS depends on two independent RVs $\varepsilon_x$ and $\varepsilon_y$. The single-element radiation pattern is obtained as \cite{tang2021wireless}
\begin{equation}
\mathbb{G}_e(\zeta_t, \zeta_r) = E(\zeta_t) \times E(\zeta_r),
\label{eq.4}
\end{equation}
where $E(\zeta)$ for $\zeta=\{\theta,\phi\}$ is calculated from (\ref{eq.6c}).
Furthermore, the normalized array factor can be expressed as (\ref{eq.6}) shown at the top of next page, where $\delta_1=-\sin({{\theta}}_t)\cos({{\phi}}_t)-\sin({{\theta}}_r)\cos({{\phi}}_r)$ and $\delta_2=-\sin({{\theta}}_t)\sin({{\phi}}_t)-\sin({{\theta}}_r)\sin({{\phi}}_r)$  \cite{tang2021wireless}.
\begin{figure*}[h]
\begin{equation}
\mathbb{G}_a = |{\frac{\sin(\frac{N_q\pi}{\lambda}(\sin({\tilde{\theta}}_t)\cos({\tilde{\phi}}_t)+\sin({\tilde{\theta}}_r)\cos({\tilde{\phi}}_r)+\delta_1)d_x)}{N_q\sin(\frac{\pi}{\lambda}(\sin({\tilde{\theta}}_t)\cos({\tilde{\phi}}_t)+\sin({\tilde{\theta}}_r)\cos({\tilde{\phi}}_r)+\delta_1)d_x)}}\times{\frac{\sin(\frac{N_q\pi}{\lambda}(\sin({\tilde{\theta}}_t)\sin({\tilde{\phi}}_t)+\sin({\tilde{\theta}}_r)\sin({\tilde{\phi}}_r)+\delta_2)d_y)}{N_q\sin(\frac{\pi}{\lambda}(\sin({\tilde{\theta}}_t)\sin({\tilde{\phi}}_t)+\sin({\tilde{\theta}}_r)\sin({\tilde{\phi}}_r)+\delta_2)d_y)}}|^2
\label{eq.6}
\end{equation}
\hrulefill
\end{figure*}

\begin{remark}
When the UAV is stable (i.e., experiences no fluctuations), $\mathbb{G}_a = 1$, whereas during fluctuations (e.g., due to wind or dynamic movement), $\mathbb{G}_a < 1$, leading to a decrease in the power of the reflected signal directed to the UE. Furthermore, as the number of IRS elements, $N$, increases, the IRS pattern becomes narrower. While this enhances directivity under stable conditions, it also makes the system more sensitive to UAV fluctuations. Even minor UAV fluctuations can cause significant beam misalignment, significantly reducing $\mathbb{G}_a$. Thus, a fundamental trade-off exists between employing larger IRS arrays to achieve higher gain and maintaining robustness against UAV fluctuations.
\label{remark1}
\end{remark}

\begin{remark}
The instantaneous SNR for a passive IRS is obtained by $A^*=1$ and $\sigma_f^2=0$. Substituting these values into \eqref{eq.12c} yields
\begin{equation}
  \gamma_{\text{max}} = \frac{P_tM(1-\zeta)\beta_0\beta_1\mathbb{G}(\sum_{n=1}^{N}|H_{BI,m,n}||h_{IU,n}|)^2}{P_t\zeta \sigma_e^2 + \sigma_n^2}.
\label{eq.16c}
\end{equation}
\end{remark}

\section{IRS Radiation Pattern Distribution Function}
The maximum SNR in (\ref{eq.12c}) depends on the IRS radiation pattern, $\mathbb{G}$. To analyze the performance, it is necessary to have the distribution of $\mathbb{G}$. Therefore, in this section, we derive the PDF of the IRS radiation pattern. From (\ref{eq.3}), it follows that $\mathbb{G}=\mathbb{G}_e\times\mathbb{G}_a$. To obtain the distribution of $\mathbb{G}$, we perform three steps. First, we derive the distribution of $\mathbb{G}_a$, and subsequently, we demonstrate that UAV fluctuations induce negligible changes in $\mathbb{G}_e$, thus it can be treated as deterministic. Finally, we obtain the distribution of $\mathbb{G}$ by applying the results from the last two steps.

\subsection*{\textbf{Step1}: Deriving the distribution of $\mathbb{G}_a$}
To simplify $\mathbb{G}_a$ given in (\ref{eq.6}), we define $Z_x$ and $Z_y$ as
\begin{subequations}
\begin{align}
&Z_x\triangleq\sin({\tilde{\theta}}_t)\cos({\tilde{\phi}}_t)+\sin({\tilde{\theta}}_r)\cos({\tilde{\phi}}_r)+\delta_1,\\ &Z_y\triangleq\sin({\tilde{\theta}}_t)\sin({\tilde{\phi}}_t)+\sin({\tilde{\theta}}_r)\sin({\tilde{\phi}}_r)+\delta_2,
\end{align}
\label{eq.13c}
\end{subequations}
where $\delta_1=-\sin({{\theta}}_t)\cos({{\phi}}_t)-\sin({{\theta}}_r)\cos({{\phi}}_r)$ and $\delta_2=-\sin({{\theta}}_t)\sin({{\phi}}_t)-\sin({{\theta}}_r)\sin({{\phi}}_r)$. Therefore, we can rewrite (\ref{eq.6}) as follows:
\begin{equation}
G_{a}(Z_x,Z_y)=\underbrace{|\frac{\sin(\frac{N_q\pi}{2}Z_x)}{N_q\sin(\frac{\pi}{2}Z_x)}|^2}_{g_{ax}}\times\underbrace{|\frac{\sin(\frac{N_q\pi}{2}Z_y)}{N_q\sin(\frac{\pi}{2}Z_y)}|^2}_{g_{ay}},
\label{eq.21}
\end{equation}
where $g_{ax}$ and $g_{ay}$ are the array factors along the $x$-axis and $y$-axis, respectively. To approximate $Z_x$ and $Z_y$ in (\ref{eq.21}) defined in (\ref{eq.13c}), we employ linear approximation. For a function of one variable $f(x)$ whose first-order derivative $f_x$ exists at a point $a$, the  $1$nd-order Taylor polynomial near the point $a$ is
\begin{equation}
f(x)\approx f(a) + f_x(a)(x-a).
\label{eq.13b}
\end{equation}
We define $\varepsilon_{\theta_q}(\varepsilon_x,\varepsilon_y) \triangleq {\tilde{\theta}}_q(\varepsilon_x,\varepsilon_y) - \theta_q$ and $\varepsilon_{\phi_q}(\varepsilon_x,\varepsilon_y) \triangleq {\tilde{\phi}}_q(\varepsilon_x,\varepsilon_y) - \phi_q$, representing the fluctuations of the elevation and azimuth angles, respectively. Consequently, ${\tilde{\theta}}_q$ and ${\tilde{\phi}}_q$ in (\ref{eq.13c}) are given by $\theta_q +\varepsilon_{\theta_q}$ and $\phi_q+\varepsilon_{\phi_q}$, respectively. Utilizing (\ref{eq.13b}), we can approximate the values of $\sin(\tilde{x})$ and $\cos(\tilde{x})$ in (\ref{eq.13c}) with $\tilde{x}=x+\varepsilon$, where $\varepsilon$ is near zero, as follows
\begin{subequations}
\begin{align}
&\sin(\tilde{x}) \approx \sin(x) + \varepsilon \cos(x),\\
&\cos(\tilde{x}) \approx \cos(x) - \varepsilon \sin(x).
\end{align}
\label{eq.16b}
\end{subequations}
Hence, we approximate $Z_x$ and $Z_y$ given in (\ref{eq.13c}) by applying (\ref{eq.16b}) as follows
\begin{subequations}
\begin{align}
Z_x(\varepsilon_{\theta_t},\varepsilon_{\phi_t},\varepsilon_{\theta_r},\varepsilon_{\phi_r}) &\approx \cos\theta_t\cos\phi_t\varepsilon_{\theta_t} - \sin\theta_t\sin\phi_t\varepsilon_{\phi_t} \nonumber \\
&+\cos\theta_r\cos\phi_r\varepsilon_{\theta_r} - \sin\theta_r\sin\phi_r\varepsilon_{\phi_r}, \\
Z_y(\varepsilon_{\theta_t},\varepsilon_{\phi_t},\varepsilon_{\theta_r},\varepsilon_{\phi_r}) &\approx \cos\theta_t\sin\phi_t\varepsilon_{\theta_t} + \sin\theta_t\cos\phi_t\varepsilon_{\phi_t} \nonumber \\
&+\cos\theta_r\sin\phi_r\varepsilon_{\theta_r} + \sin\theta_r\cos\phi_r\varepsilon_{\phi_r}.
\end{align}
\label{eq.23}
\end{subequations}
Note that the linear approximation, which enabled our analytical derivation, is valid under small-angle variation conditions (1–2 degrees), a state maintained by stable UAV flight controllers. Our simulation results under these conditions confirm the model accuracy, showing only $1 \%$ approximation error for $\sigma_k^2=1^\circ, k\in\{x,y\}$, which is negligible and does not affect the key insights or conclusions of the paper. A limitation of this approach is that its accuracy may degrade for highly dynamic flight maneuvers with larger attitude variances.

To derive the PDF of $\mathbb{G}_a$, we must determine the distribution of $Z_x$ and $Z_y$, which depend on $\varepsilon_{\theta_q}$ and $\varepsilon_{\phi_q}$, whose distributions are determined in Lemma \ref{lemma1} and Lemma \ref{lemma2}.
\begin{lemma}
The distributions of $\varepsilon_{\theta_q}(\varepsilon_x,\varepsilon_y)$ follows $\varepsilon_{\theta_{q}}\sim\mathcal{N}(\mu_{\varepsilon_{\theta_q}},\sigma^2_{\varepsilon_{\theta_q}})$ where
\begin{subequations}
\begin{align}
&\mu_{\varepsilon_{\theta_q}} = A_{\theta_{qx}}\mu_x+A_{\theta_{qy}}\mu_y,\\ &\sigma^2_{\varepsilon_{\theta_q}}=A^2_{\theta_{qx}}\sigma^2_x+A^2_{\theta_{qy}}\sigma^2_y,
\end{align}
\end{subequations}

\begin{equation}
\begin{aligned}
A_{\theta_{qk}}=\frac{(1+\tan^2\theta_{qk})\tan\theta_{qk}}{\sqrt{\tan^2\theta_{qx}+\tan^2\theta_{qy}}(1+\tan^2\theta_{qx}+\tan^2\theta_{qy})},&\\
k\in\{x,y\}.&
\end{aligned}
\end{equation}
\label{lemma1}
\end{lemma}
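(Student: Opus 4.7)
The plan is to recognize that $\varepsilon_{\theta_q} = \tilde{\theta}_q(\varepsilon_x,\varepsilon_y) - \theta_q$ is a smooth nonlinear function of the jointly independent Gaussian inputs $\varepsilon_x \sim \mathcal{N}(\mu_x,\sigma_x^2)$ and $\varepsilon_y \sim \mathcal{N}(\mu_y,\sigma_y^2)$, and to replace it by its first-order Taylor expansion about $(\varepsilon_x,\varepsilon_y)=(0,0)$, consistent with the small-fluctuation regime already used in \eqref{eq.16b}. Because a linear combination of independent Gaussians is Gaussian, the distribution claim reduces to computing the two partial derivatives of $\tilde{\theta}_q$ at the origin and showing that they coincide with $A_{\theta_{qx}}$ and $A_{\theta_{qy}}$.

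Concretely, I would first write $\tilde{\theta}_q(\varepsilon_x,\varepsilon_y) = \arctan\!\bigl(u(\varepsilon_x,\varepsilon_y)\bigr)$ with $u \triangleq \sqrt{\tan^2(\theta_{qx}+\varepsilon_x)+\tan^2(\theta_{qy}+\varepsilon_y)}$, and note that at the origin $u(0,0)=\sqrt{\tan^2\theta_{qx}+\tan^2\theta_{qy}}$, so $\tilde{\theta}_q(0,0)=\theta_q$ by \eqref{eq.1}. This makes $\varepsilon_{\theta_q}(0,0)=0$, so only the first-order terms survive. Applying the chain rule, $\partial \tilde{\theta}_q/\partial \varepsilon_k = \frac{1}{1+u^2}\,\partial u/\partial \varepsilon_k$ for $k\in\{x,y\}$, and the inner derivative at the origin evaluates to $\tan\theta_{qk}(1+\tan^2\theta_{qk})/\sqrt{\tan^2\theta_{qx}+\tan^2\theta_{qy}}$ using $\frac{d}{d\varepsilon}\tan(\theta+\varepsilon)=1+\tan^2(\theta+\varepsilon)$. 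Multiplying by $\frac{1}{1+u^2(0,0)}=\frac{1}{1+\tan^2\theta_{qx}+\tan^2\theta_{qy}}$ yields exactly $A_{\theta_{qk}}$ as stated in the lemma.

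Having identified the sensitivities, I would then write the linearization
\begin{equation}
\varepsilon_{\theta_q}(\varepsilon_x,\varepsilon_y) \approx A_{\theta_{qx}}\,\varepsilon_x + A_{\theta_{qy}}\,\varepsilon_y,
\end{equation}
and invoke standard properties of Gaussians: the right-hand side is an affine combination of the two independent Gaussians $\varepsilon_x$ and $\varepsilon_y$, hence itself Gaussian, with mean $A_{\theta_{qx}}\mu_x+A_{\theta_{qy}}\mu_y$ and variance $A_{\theta_{qx}}^2\sigma_x^2+A_{\theta_{qy}}^2\sigma_y^2$. This matches the expressions for $\mu_{\varepsilon_{\theta_q}}$ and $\sigma^2_{\varepsilon_{\theta_q}}$ in the lemma statement and completes the argument.

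The only delicate point is the partial-derivative calculation: keeping the chain-rule bookkeeping organized through both the outer $\arctan$, the inner square root, and the $\tan$ derivative is where algebraic slips are most likely. The Gaussian-closure step is essentially immediate once the linearization is in place, and the accuracy of the approximation itself is already justified in the paragraph following \eqref{eq.23}, so no separate remainder bound is needed here.
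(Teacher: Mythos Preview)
Your proposal is correct and follows essentially the same approach as the paper's proof in Appendix~\ref{appendix.a}: linearize $\varepsilon_{\theta_q}$ via a first-order Taylor expansion about $(0,0)$, identify the coefficients as the partial derivatives $A_{\theta_{qk}}$, and conclude Gaussianity from the linear combination of independent Gaussians. In fact, you provide more explicit chain-rule detail for computing $A_{\theta_{qk}}$ than the paper does, which simply states the result.
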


\begin{proof}
Please refer to Appendix \ref{appendix.a}.
\end{proof}

\begin{lemma}
The distributions of $\varepsilon_{\phi_q}(\varepsilon_x,\varepsilon_y)$ follows $\varepsilon_{\phi_{q}}\sim\mathcal{N}(\mu_{\varepsilon_{\phi_q}},\sigma^2_{\varepsilon_{\phi_q}})$ where
\begin{subequations}
\begin{align}
&\mu_{\varepsilon_{\phi_q}} = A_{\phi_{qx}}\mu_x+A_{\phi_{qy}}\mu_y,\\
&\sigma^2_{\varepsilon_{\phi_q}}=A^2_{\phi_{qx}}\sigma^2_x+A^2_{\phi_{qy}}\sigma^2_y,
\end{align}
\end{subequations}

\begin{subequations}
\begin{align}
&A_{\phi_{qx}}=-\frac{(1+\tan^2\theta_{qx})\tan\theta_{qy}}{\tan^2\theta_{qx}+\tan^2\theta_{qy}},\\
&A_{\phi_{qy}}=-\frac{(1+\tan^2\theta_{qy})\tan\theta_{qx}}{\tan^2\theta_{qx}+\tan^2\theta_{qy}}.
\end{align}
\end{subequations}
\label{lemma2}
\end{lemma}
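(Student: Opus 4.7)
The plan is to mirror the structure of the proof of Lemma~\ref{lemma1} (cf.~Appendix~A), now applied to the azimuth fluctuation rather than the elevation fluctuation. Starting from the definition $\varepsilon_{\phi_q}(\varepsilon_x,\varepsilon_y) \triangleq \tilde{\phi}_q(\varepsilon_x,\varepsilon_y)-\phi_q$ and using the fact that $\varepsilon_x,\varepsilon_y$ concentrate near zero, I would perform a multivariate first-order Taylor expansion of $\tilde{\phi}_q$ about $(\varepsilon_x,\varepsilon_y)=(0,0)$, producing the linear representation
\begin{equation*}
\varepsilon_{\phi_q} \;\approx\; \frac{\partial \tilde{\phi}_q}{\partial \varepsilon_x}\bigg|_{(0,0)}\varepsilon_x \;+\; \frac{\partial \tilde{\phi}_q}{\partial \varepsilon_y}\bigg|_{(0,0)}\varepsilon_y.
\end{equation*}
The lemma then reduces to (i) identifying these two partial derivatives with $A_{\phi_{qx}}$ and $A_{\phi_{qy}}$, and (ii) invoking the closure of Gaussians under linear combinations.

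For step (i), I would differentiate the explicit expression $\tilde{\phi}_q = \arctan\!\bigl(\tan(\theta_{qy}+\varepsilon_y)/\tan(\theta_{qx}+\varepsilon_x)\bigr)$ using the chain rule with $(\arctan u)'=1/(1+u^2)$, $(\tan v)'=1+\tan^2 v$, and the quotient rule. After evaluation at $\varepsilon_x=\varepsilon_y=0$, the prefactor $1/(1+(\tan\theta_{qy}/\tan\theta_{qx})^2)$ can be rewritten as $\tan^2\theta_{qx}/(\tan^2\theta_{qx}+\tan^2\theta_{qy})$ and combined with the inner derivatives; after cancellation of a $\tan\theta_{qx}$ factor this collapses precisely to the compact forms of $A_{\phi_{qx}}$ and $A_{\phi_{qy}}$ stated in the lemma.

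For step (ii), since $\varepsilon_x\sim\mathcal{N}(\mu_x,\sigma_x^2)$ and $\varepsilon_y\sim\mathcal{N}(\mu_y,\sigma_y^2)$ are independent, the linear combination $A_{\phi_{qx}}\varepsilon_x+A_{\phi_{qy}}\varepsilon_y$ is itself normally distributed, with mean $A_{\phi_{qx}}\mu_x+A_{\phi_{qy}}\mu_y$ and variance $A_{\phi_{qx}}^2\sigma_x^2+A_{\phi_{qy}}^2\sigma_y^2$, matching the claimed $\mu_{\varepsilon_{\phi_q}}$ and $\sigma^2_{\varepsilon_{\phi_q}}$. This step is essentially automatic once the coefficients have been identified.

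The main obstacle is the bookkeeping in the quotient-plus-$\arctan$ differentiation: the intermediate expressions are algebraically dense, and careful sign handling is needed to reach the compact closed form given in the lemma. A secondary subtlety is that the derivative is singular on the set $\tan\theta_{qx}=\tan\theta_{qy}=0$, reflecting the inherent non-differentiability of the azimuth map on the $z$-axis; this degenerate geometry corresponds to the UAV being directly overhead both the BS and the UE and is excluded by the deployment assumptions, so it should be noted but does not affect the argument. Beyond these algebraic points, the validity of the linearization itself rests on the small-angle regime already invoked for Lemma~\ref{lemma1} and justified in the paper's discussion following~(\ref{eq.23}).
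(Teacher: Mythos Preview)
Your proposal is correct and follows essentially the same approach as the paper: the paper's proof simply states that $\varepsilon_{\phi_q}$ is linearized as $A_{\phi_{qx}}\varepsilon_x+A_{\phi_{qy}}\varepsilon_y$ with the coefficients obtained ``following the same procedure outlined in the proof of Lemma~\ref{lemma1},'' and that the Gaussian distribution then follows analogously. You spell out the chain-rule differentiation of $\arctan(\tan(\theta_{qy}+\varepsilon_y)/\tan(\theta_{qx}+\varepsilon_x))$ and the Gaussian-closure step in more detail than the paper does, but the underlying argument is identical.
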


\begin{proof}
We approximate $\varepsilon_{\phi_q}$ using linear approximation as follows
\begin{equation}
\varepsilon_{\phi_q}(\varepsilon_x,\varepsilon_y) \approx A_{\phi_{qx}}\varepsilon_x + A_{\phi_{qy}}\varepsilon_y,
\label{eq.18}
\end{equation}
where $A_{\phi_{qx}}$ and $A_{\phi_{qy}}$ are derived following the same procedure outlined in the proof
of Lemma \ref{lemma1} in Appendix \ref{appendix.a}. The distribution of $\varepsilon_{\phi_q}$ is also obtained analogously.
\end{proof}
Since all RVs in (\ref{eq.23}), i.e., $\varepsilon_{\theta_t}$, $\varepsilon_{\phi_t}$, $\varepsilon_{\theta_r}$ and $\varepsilon_{\phi_r}$, follow Gaussian distributions individually, and any linear combination of them also follows a Gaussian distribution, it can be concluded that they are jointly Gaussian RVs. To illustrate, consider the linear combination of them for arbitrary coefficients $\alpha$, $\beta$, $\xi$, and $\kappa$ using (\ref{eq.18}) and (\ref{eq.14}) (See Appendix \ref{appendix.a}) as follows
\begin{equation}
\begin{aligned}
\alpha\varepsilon_{\theta_t}&+\beta\varepsilon_{\phi_t}+\xi\varepsilon_{\theta_r}+\kappa\varepsilon_{\phi_r}=\alpha A_{\theta_{tx}}\varepsilon_x+\alpha A_{\theta_{ty}}\varepsilon_y \\
&+ \beta A_{\phi_{tx}}\varepsilon_x+\beta A_{\phi_{ty}}\varepsilon_y+\xi A_{\theta_{rx}}\varepsilon_x+\xi A_{\theta_{ry}}\varepsilon_y\\
&+\kappa A_{\phi_{rx}}\varepsilon_x+\kappa A_{\phi_{ry}}\varepsilon_y.
\end{aligned}
\label{eq.23c}
\end{equation}
It is clear from (\ref{eq.23c}) that this linear combination is itself a linear function of the independent Gaussian RVs $\varepsilon_x$ and $\varepsilon_y$. Therefore, any such linear combination remains Gaussian.
Therefore, the distributions of $Z_k$ for $k\in\{x,y\}$ follows the Gaussian distribution, specifically $Z_k\sim \mathcal{N}(\mu_{z_k},\sigma^2_{z_k})$ where $\mu_{z_k}$ is calculated as
\begin{equation}
\mu_{z_k}=\mathbf{a}_k^T {\boldsymbol{\mu}}_{\varepsilon}
\end{equation}
where $\mathbf{a}_x=[\cos\theta_t\cos\phi_t,-\sin\theta_t\sin\phi_t,\cos\theta_r\cos\phi_r,\\-\sin\theta_r\sin\phi_r]^T$, $\mathbf{a}_y=[\cos\theta_t\sin\phi_t,\sin\theta_t\cos\phi_t,\cos\theta_r\sin\phi_r,\\\sin\theta_r\cos\phi_r]^T$, and ${\boldsymbol{\mu}}_{\varepsilon}=[\mu_{\varepsilon_{\theta_t}},\mu_{\varepsilon_{\phi_t}},\mu_{\varepsilon_{\theta_r}},\mu_{\varepsilon_{\phi_r}}]^T$.
The variance of $Z_k$, $\sigma^2_{z_k}$ is obtained as
\begin{equation}
\sigma^2_{z_k}=\mathbf{a}_k^T\mathbf{\Sigma}\mathbf{a}_k
\end{equation}
where
\begin{equation}
\scalebox{0.85}{$\mathbf{\Sigma} =  \begin{bmatrix}
                    {\sigma^2_{\varepsilon_{\theta_t}}} & \text{Cov}(\varepsilon_{\theta_t},\varepsilon_{\phi_t}) & \text{Cov}(\varepsilon_{\theta_t},\varepsilon_{\theta_t}) & \text{Cov}(\varepsilon_{\theta_t},\varepsilon_{\phi_r}) \\
                    \text{Cov}(\varepsilon_{\theta_t},\varepsilon_{\phi_t}) & \sigma^2_{\varepsilon_{\phi_t}} & \text{Cov}(\varepsilon_{\phi_t},\varepsilon_{\theta_r}) & \text{Cov}(\varepsilon_{\phi_t},\varepsilon_{\phi_r}) \\
                    \text{Cov}(\varepsilon_{\theta_t},\varepsilon_{\theta_r}) & \text{Cov}(\varepsilon_{\phi_t},\varepsilon_{\theta_r}) & \sigma^2_{\varepsilon_{\theta_r}} & \text{Cov}(\varepsilon_{\theta_r},\varepsilon_{\phi_r}) \\
                    \text{Cov}(\varepsilon_{\theta_t},\varepsilon_{\phi_r}) & \text{Cov}(\varepsilon_{\phi_t},\varepsilon_{\phi_r}) & \text{Cov}(\varepsilon_{\theta_r},\varepsilon_{\phi_r}) & \sigma^2_{\varepsilon_{\phi_r}}
                  \end{bmatrix}$}
\label{eq.24}
\end{equation}
The $\text{Cov}(\varepsilon_m,\varepsilon_n)$ in (\ref{eq.24}) is calculated as
\begin{equation}
\text{Cov}(\varepsilon_m,\varepsilon_n)=A_{mx}A_{nx}\sigma_x^2+A_{my}A_{ny}\sigma^2_y,
\end{equation}
where
$(m,n) \in \{(\theta_t,\phi_t),(\theta_t,\theta_r),(\theta_t,\phi_r),(\phi_t,\theta_r),(\phi_t,\phi_r),\\(\theta_r,\phi_r)\}$.
For simplicity, we approximate the numerator and denominator of $g_{ax}$ in (\ref{eq.21}) when $Z_x$ is near zero as $\frac{1}{2}(1-\cos(N_q\pi Z_x))$ and $\frac{1}{4} N^2_q\pi^2Z^2_x$, respectively. Hence, $g_{ax}$ in (\ref{eq.21}) can be approximate as follows
\begin{equation}
g_{ax}\simeq\frac{2(1-\cos(N_q\pi Z_x))}{N_q^2\pi^2Z_x^2}.
\label{eq.27}
\end{equation}
Following a similar approach, the approximation of $g_{ay}$ can be derived in the same manner as (\ref{eq.27}) by replacing the subscript $x$ with $y$.
Deriving a closed-form PDF for $\mathbb{G}_a$ is mathematically intractable. We cannot easily integrate over this complex function to find a statistical distribution. To solve this, we propose a sectoral model, which is a form of simplification. Therefore, we split $g_{ax}$ in (\ref{eq.27}) into several sectors and propose a simpler model given by
\begin{equation}
\begin{aligned}
&g_{ax}\simeq\Pi(\frac{DN_q|Z_x|}{2})+\sum_{i=1}^{lD-1}{\frac{D^2(1-\cos(\frac{2\pi i}{D}))}{2\pi^2i^2}}\\
&\qquad\qquad\times[\Pi(\frac{DN_q|Z_x|}{2(i+1)})-\Pi(\frac{DN_q|Z_x|}{2i})],
\end{aligned}
\label{eq.28}
\end{equation}
where $\Pi(x)=\begin{cases} 1, & \mbox{if } x\leq 1 \\ 0, & \mbox{if } x> 1 \end{cases}$. $D$ represents the number of sectors, and $l \in \{1, 2\}$ whereby $l = 1$ corresponds to considering only the main lobe of the pattern, while $l = 2$ is used for higher precision by including both the main lobe and the first side lobe. The simplified model for $g_{ay}$ can be obtained similarly to (\ref{eq.28}) by replacing the subscript $x$ with $y$. Fig.~\ref{fig3} plots the exact model and sectoral model of the IRS array factor along the $x$-axis, $g_{ax}$ versus $\varepsilon_x$ for $D = 5$ and $15$, with $N=64$. We can increase the accuracy of the sectoral model by increasing $D$; however, this improvement comes at the cost of increased computational complexity, as noted in Remark \ref{remark6}. As shown in Fig.~\ref{fig3}, when $l=1$, the proposed model only takes into account the main lobe, whereas, for $l=2$, the first side lobe is also included.
\begin{figure}[!t]
\centering
\includegraphics[width=3.5in]{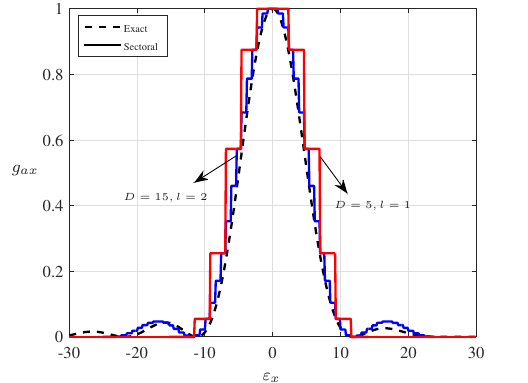}
\caption{The exact model of the IRS array factor along $x$-axis, $g_{ax}$, mentioned in (\ref{eq.21}) and its sectoral model, proposed in (\ref{eq.28}) versus $\varepsilon_x$ for $N=64$.}
\label{fig3}
\end{figure}
\begin{lemma}
The PDF of array factor $G_a$, as defined in (\ref{eq.6}) is derived as
\begin{equation}
f_{\mathbb{G}_a}(\mathbb{G}_a)=\sum_{i=1}^{J}p_a(i)\delta(\mathbb{G}_a-q_a(i))
\label{eq.27b}
\end{equation}
where $J=l^2D^2$, $\mathbf{q}_a = \mathbf{q}_{ax}\otimes \mathbf{q}_{ay}$ and $\mathbf{p}_a = \mathbf{p}_{ax}\otimes \mathbf{p}_{ay}$. Here, $\mathbf{q}_{ak}\in \mathbb{R}^{lD\times1}$ and $\mathbf{p}_{ak}\in \mathbb{R}^{lD\times1}$ where
\begin{equation}
q_{ak}(i)=\frac{D^2(1-\cos(\frac{2\pi i}{D}))}{2\pi^2i^2},
\end{equation}
\begin{equation}
\begin{aligned}
p_{ak}(i)=Q(\frac{2i-DN_q\mu_{Z_k}}{DN_q\sigma_{Z_k}})-Q(\frac{2(i+1)+DN_q\mu_{Z_k}}{DN_q\sigma_{Z_k}})&\\
\qquad+Q(\frac{2i+DN_q\mu_{Z_k}}{DN_q\sigma_{Z_k}})-Q(\frac{2(i+1)-DN_q\mu_{Z_k}}{DN_q\sigma_{Z_k}})&,\\
k\in\{x,y\}&.
\end{aligned}
\end{equation}
\end{lemma}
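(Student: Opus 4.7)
The plan is to exploit the piecewise-constant structure of the sectoral approximation in (\ref{eq.28}), compute the probability mass at each sector via the Gaussian laws of $Z_x,Z_y$ established earlier, and then combine the two axes through a Kronecker product to obtain the discrete law of the product $\mathbb{G}_a = g_{ax}\, g_{ay}$.

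First, I would read off from (\ref{eq.28}) that $g_{ax}$ is a deterministic step function of $|Z_x|$: it equals $1$ on the main-lobe interval $\{|Z_x|\le 2/(DN_q)\}$ and the constant $D^2\bigl(1-\cos(2\pi i/D)\bigr)/(2\pi^2 i^2)$ on each annular sector $\{2i/(DN_q)<|Z_x|\le 2(i+1)/(DN_q)\}$ for $i=1,\dots,lD-1$. Hence $g_{ax}$ is a discrete random variable supported on the $lD$ values collected in $\mathbf{q}_{ax}$, so its density is automatically a mixture of Dirac masses, $f_{g_{ax}}(x)=\sum_i p_{ax}(i)\,\delta(x-q_{ax}(i))$, and the only nontrivial task is to evaluate each sector probability.

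Second, using $Z_x\sim\mathcal{N}(\mu_{Z_x},\sigma_{Z_x}^2)$ from Step 1, I would compute
\[
p_{ax}(i)=\Pr\!\Bigl(\tfrac{2i}{DN_q}<|Z_x|\le\tfrac{2(i+1)}{DN_q}\Bigr)
\]
by splitting the absolute value into the two tails $\{a<Z_x\le b\}$ and $\{-b\le Z_x<-a\}$ and applying $\Pr(Z_x>c)=Q((c-\mu_{Z_x})/\sigma_{Z_x})$ together with $\Pr(Z_x<-c)=Q((c+\mu_{Z_x})/\sigma_{Z_x})$. This yields four $Q$-function terms; clearing the common factor $DN_q$ from each argument recovers exactly the expression claimed in the lemma for $p_{ak}(i)$. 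The derivation for $g_{ay}$ is verbatim with $x\mapsto y$.

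Third, I would combine the marginals. Under the sectoral model, the two axes are treated as independent factors, so the joint PMF of $(g_{ax},g_{ay})$ factorizes: the product $\mathbb{G}_a=g_{ax}g_{ay}$ takes the value $q_{ax}(i)\,q_{ay}(j)$ with probability $p_{ax}(i)\,p_{ay}(j)$. Re-indexing the $l^2D^2$ pairs $(i,j)$ as a single index $1,\dots,J$ is precisely the definition of the Kronecker product, giving $\mathbf{q}_a=\mathbf{q}_{ax}\otimes\mathbf{q}_{ay}$ and $\mathbf{p}_a=\mathbf{p}_{ax}\otimes\mathbf{p}_{ay}$, which yields (\ref{eq.27b}).

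The main obstacle is the independence step: from (\ref{eq.23}), both $Z_x$ and $Z_y$ are linear combinations of the same four Gaussian fluctuations $\{\varepsilon_{\theta_t},\varepsilon_{\phi_t},\varepsilon_{\theta_r},\varepsilon_{\phi_r}\}$, so a direct computation of $\mathrm{Cov}(Z_x,Z_y)$ against the covariance matrix $\boldsymbol{\Sigma}$ in (\ref{eq.24}) generically produces nonzero cross terms. A rigorous justification therefore has to either show that these cross terms vanish under the geometry at hand (e.g.\ by exploiting orthogonality of the coefficient vectors $\mathbf{a}_x$ and $\mathbf{a}_y$ for the relevant $(\theta_q,\phi_q)$), or argue that the residual correlation contributes an error of smaller order than the sectoral quantization error already incurred in passing from (\ref{eq.27}) to (\ref{eq.28}). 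The remaining pieces of the proof are largely bookkeeping: verifying the sector boundaries, checking that the $Q$-function differences are nonnegative, and confirming that the $lD\times lD$ pairs exhaust the support of the product.
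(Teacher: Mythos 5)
Your proposal follows essentially the same route as the paper's Appendix~\ref{appendix.b}: identify $g_{ax}$ as a piecewise-constant (hence discrete) function of the Gaussian $|Z_x|$, evaluate each sector mass as the four $Q$-function terms, repeat for $g_{ay}$, and combine the two axes into the Kronecker-product form of \eqref{eq.27b}. The independence issue you flag is genuine but is not resolved in the paper either---its conditional-density step writes $f_{\mathbb{G}_a|g_{ax}}$ with the marginal weights $p_{ay}(i)$, which silently assumes $Z_x$ and $Z_y$ (hence $g_{ax}$ and $g_{ay}$) are independent even though $\mathrm{Cov}(Z_x,Z_y)=\mathbf{a}_x^{T}\boldsymbol{\Sigma}\,\mathbf{a}_y$ is generically nonzero---so your caveat is, if anything, more careful than the published argument.
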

\begin{proof}
Please refer to Appendix \ref{appendix.b}.
\end{proof}

\subsection*{\textbf{Step2}: Treating $\mathbb{G}_e$ as a Deterministic Component}
Given that $\tilde{\theta}_q=\theta_q+\varepsilon_{\theta_q}$, we can rewrite $\mathbb{G}_e$ mentioned in (\ref{eq.4}) as follows
\begin{equation}
\mathbb{G}_e=\underbrace{\cos^3(\theta_t+\varepsilon_{\theta_t})}_{g_{et}}\times\underbrace{\cos^3(\theta_r+\varepsilon_{\theta_r})}_{g_{er}},
\label{eq.33}
\end{equation}
where $g_{et}$ and $g_{er}$ are the single element radiation patterns in direction of angles $\theta_t$ and $\theta_r$. For a function of one variable $f(x)$ whose first- and second-order derivatives $f^{(1)}(x)$ and $f^{(2)}(x)$, respectively, exist at the point $a$, the  $2$nd-order Taylor polynomial near the point $a$ is
\begin{equation}
f(x)\approx f(a) + f^{(1)}(a)(x-a) + f^{(2)}(a) (x-a)^2.
\label{eq.68b}
\end{equation}
Using (\ref{eq.68b}), for small $\varepsilon_{\theta_q}$, the second-order Taylor expansion around $\theta_q$ is
\begin{equation}
\begin{aligned}
g_{eq} \approx \cos^3(\theta_q)&-3\cos^2(\theta_q)\sin(\theta_q)\varepsilon_{\theta_q}\\ &+\frac{3}{2}\cos(\theta_q)(2-3\cos^2(\theta_q))\varepsilon^2_{\theta_q}
\end{aligned}
\end{equation}
We approximate $\mathbb{G}_e \approx \cos^3(\theta_t)\cos^3(\theta_r)$. The expected error of this approximation is
\begin{equation}
\begin{aligned}
E\{\text{error}\} = &-3\cos^2(\theta_q)\sin(\theta_q)E\{\varepsilon_{\theta_q}\}\\ &+\frac{3}{2}\cos(\theta_q)(2-3\cos^2(\theta_q))E\{\varepsilon^2_{\theta_q}\}
\end{aligned}
\end{equation}
Given that $E\{\varepsilon_{\theta_q}\}=0$ and $E\{\varepsilon^2_{\theta_q}\}$ is very small, the expected error is negligible.
Moreover, $\mathbb{G}_e$ is independent of the number of IRS elements $N$, in contrast to the normalized array factor $\mathbb{G}_a$, which is highly sensitive to $N$. For the subsequent optimization over $N$, we therefore treat $g_{eq}$ as the constant $g_{eq} \approx \cos^3(\theta_q)$. Specifically, the $\mathbb{G}_e$ is treated as
\begin{equation}
\mathbb{G}_e=\cos^3(\theta_t)\times\cos^3(\theta_r).
\label{eq.38c}
\end{equation}
This treatment is consequently applied to $E(\zeta_q)$ for $q\in\{t,r\}$, i.e., $E(\zeta_q)=\cos^3(\theta_q)$.

\subsection*{\textbf{Step3}: Deriving the distribution of $\mathbb{G}$}
Utilizing distribution of $\mathbb{G}_a$ in (\ref{eq.27b}) and treated expression of $\mathbb{G}_e$ in (\ref{eq.38c}), the distribution of $\mathbb{G}$ can be obtained.
\begin{theorem}
The PDF of array radiation gain $\mathbb{G}$ is derived as:
\begin{equation}
f_{\mathbb{G}}(\mathbb{G})=\sum_{i=1}^{J}p(i)\delta(\mathbb{G}-q(i))
\label{eq.38}
\end{equation}
where $J = l^2D^2$, $\mathbf{q} = q_e\mathbf{q}_{a}$, $q_e =\cos^3(\theta_t)\times\cos^3(\theta_r)$, and $\mathbf{p} = \mathbf{p}_{a}$.
\label{theorem1}
\end{theorem}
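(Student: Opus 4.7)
The plan is to leverage the two preceding steps and treat this theorem as essentially a scaling transformation of an already-derived discrete distribution. From Step 2, the single-element factor $\mathbb{G}_e$ has been argued to be deterministic (the Taylor-expansion error analysis shows that the contribution of the zero-mean first-order term vanishes in expectation and the second-order term is negligible), so we may set $\mathbb{G}_e \equiv q_e := \cos^3(\theta_t)\cos^3(\theta_r)$. From Step 1, the array factor $\mathbb{G}_a$ is a discrete random variable with mass function supported on the atoms $\{q_a(i)\}_{i=1}^{J}$ with probabilities $\{p_a(i)\}_{i=1}^{J}$, where $J=l^2 D^2$.

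My plan is therefore to write $\mathbb{G} = \mathbb{G}_e \cdot \mathbb{G}_a = q_e \cdot \mathbb{G}_a$ and apply the standard transformation of a discrete random variable by a nonzero deterministic scalar. Explicitly, for any Borel set $B$,
\begin{equation*}
\Pr(\mathbb{G} \in B) = \Pr(q_e \mathbb{G}_a \in B) = \sum_{i=1}^{J} p_a(i)\, \mathbf{1}\{q_e q_a(i) \in B\},
\end{equation*}
which gives the PDF in the sense of distributions as
\begin{equation*}
f_{\mathbb{G}}(\mathbb{G}) = \sum_{i=1}^{J} p_a(i)\, \delta\bigl(\mathbb{G} - q_e q_a(i)\bigr).
\end{equation*}
Identifying $q(i) = q_e q_a(i)$ and $p(i) = p_a(i)$ yields the claim; in vector form this is exactly $\mathbf{q} = q_e \mathbf{q}_a$ and $\mathbf{p} = \mathbf{p}_a$.

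The only subtlety worth flagging is the scaling property of the Dirac delta, namely $\delta(q_e x - q_e q_a(i)) = |q_e|^{-1}\delta(x - q_a(i))$, which is compensated by the Jacobian when changing variables in $f_{\mathbb{G}_a}$ to obtain $f_{\mathbb{G}}$; the two factors of $|q_e|$ cancel and leave the probabilities unchanged, as they must since the transformation is deterministic and bijective (noting $q_e>0$ whenever $\theta_t,\theta_r\in[0,\pi/2]$, consistent with \eqref{eq.6c}). Because both Step 1 and Step 2 have done the heavy lifting, there is no real obstacle in this final step: it is a one-line scaling argument and the proof is essentially mechanical. The hardest conceptual point already resolved upstream was the sectoral approximation that made $\mathbb{G}_a$ discrete in the first place; once that discreteness is in hand, multiplying by a constant $q_e$ preserves the atomic structure and merely relabels the support points.
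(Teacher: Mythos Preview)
Your proposal is correct and follows essentially the same approach as the paper: the paper's proof is the one-line change-of-variables $f_{\mathbb{G}}(\mathbb{G}) = \mathbb{G}_e^{-1} f_{\mathbb{G}_a}(\mathbb{G}/\mathbb{G}_e)$ with $\mathbb{G}_e=q_e$ deterministic, which is exactly your scaling argument with the Dirac-delta Jacobian. Your additional remarks on the bijectivity and the cancellation of the $|q_e|$ factors simply make explicit what the paper leaves implicit.
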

\begin{proof}
As $\mathbb{G}$ is equal to the product of $\mathbb{G}_e$ and $\mathbb{G}_a$, the PDF of $\mathbb{G}$ is calculated as $f_\mathbb{G}(\mathbb{G}) = \frac{1}{\mathbb{G}_e}f_{\mathbb{G}_a}(\mathbb{G}/\mathbb{G}_e)$. By substituting (\ref{eq.27b}) and (\ref{eq.38c}), the PDF of $\mathbb{G}_e$ is obtained.
\end{proof}

\section{Performance Analysis}
This section conducts a comprehensive performance analysis for passive/active IRS. The closed-form expressions for both scenarios are presented when taking into account the fluctuation of the IRS mounted on the UAV. Since the rate of both UAV fluctuations and channel variations is less than the symbol rate, we will focus on the OP metric that is specified as
\begin{equation}
P_{\text{out}}=\text{Pr}(\gamma_{\text{max}}\leq\gamma_{\text{th}}),
\label{eq.39}
\end{equation}
where $\gamma_{\text{max}}$ and $\gamma_{\text{th}}$ is the maximum instantaneous SNR and the SNR threshold, respectively.

\subsection{Passive-IRS}
The instantaneous maximum SNR of the Passive-IRS system can be calculated by rewriting (\ref{eq.16c}) as
\begin{equation}
\gamma_{\text{max}}=\frac{M\gamma_0\mathbb{G}U}{b_1},
\label{eq.41}
\end{equation}
where $\gamma_0 = \frac{P_t}{\sigma_n^2}$ and $b_1=\frac{P_t\zeta \sigma_e^2}{\sigma_n^2}+1$. Here, we define $U=V^2$ where $V = \sqrt{\beta_0\beta_1(1-\zeta)}\sum_{n=1}^{N}|H_{BI,m,n}||h_{IU,n}|$.
From (\ref{eq.39}) and (\ref{eq.41}), the OP can be obtained as
\begin{subequations}
\begin{align}
P_{\text{out}}&=\text{Pr}(\mathbb{G}U\leq\frac{b_1\gamma_{th}}{M\gamma_0})\\
&=\int\text{Pr}(\mathbb{G}U\leq\frac{b_1\gamma_{th}}{M\gamma_0}|\mathbb{G})f_{\mathbb{G}}(\mathbb{G})d\mathbb{G}\\
&=\int F_U(\frac{b_1\gamma_{th}}{M\gamma_0\mathbb{G}})f_{\mathbb{G}}(\mathbb{G})d\mathbb{G},
\end{align}
\label{eq.42}
\end{subequations}
where $F_U(.)$ is the CDF of random variable $U$. The PDF of $\mathbb{G}$, $f_{\mathbb{G}}(\mathbb{G})$ is derived in Theorem \ref{theorem1} (Eq. \ref{eq.38}). By substituting (\ref{eq.38}) into (\ref{eq.42}), the OP can be given as
\begin{equation}
P_{\text{out}}=\sum_{i=1}^{J}p(i)F_U(\frac{b_1\gamma_{th}}{M\gamma_0q(i)}),
\label{eq.43}
\end{equation}
To determine the CDF of $U$, $F_U(.)$ in (\ref{eq.43}), we estimate $V$, which is the mixture channels associated with IRS, by using the CLT and Gamma approximation.
\begin{theorem}
Under CLT approximation, the closed-form OP for the Passive-IRS system is given by
\begin{equation}
P_{\text{out}}=\sum_{i=1}^{J}p(i)T(q(i)),
\label{eq.44c}
\end{equation}
where
\begin{equation}
T(x)=Q(\frac{\mu_v-\sqrt{\frac{b_1\gamma_{th}}{M\gamma_0x}}}{\sigma_v})-Q(\frac{\mu_v+\sqrt{\frac{b_1\gamma_{th}}{M\gamma_0x}}}{\sigma_v}),
\end{equation}
\begin{equation}
\mu_v = \frac{N\pi c_0\sqrt{1-\zeta}}{4\sqrt{(K_0+1)(K_1+1)}d_0^{\alpha_0/2}d_1^{\alpha_1/2}}L_{\frac{1}{2}}(-K_0)L_{\frac{1}{2}}(-K_1),
\label{eq.45}
\end{equation}
\begin{equation}
\begin{aligned}
\sigma^2_v = \frac{c_0^2N(1-\zeta)}{d_0^{\alpha_0}d_1^{\alpha_1}}(1&-\frac{\pi^2}{16(K_0+1)(K_1+1)}\\
&\quad\quad\quad\times(L_{\frac{1}{2}(-K_0)})^2\times(L_{\frac{1}{2}(-K_1)})^2).
\end{aligned}
\label{eq.46}
\end{equation}
\end{theorem}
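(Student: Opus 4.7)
The plan is to start from the expression for $P_{\text{out}}$ already derived in \eqref{eq.43}, namely $P_{\text{out}}=\sum_{i=1}^{J}p(i)F_U\!\left(\frac{b_1\gamma_{th}}{M\gamma_0 q(i)}\right)$. Everything except $F_U$ is already explicit via Theorem~\ref{theorem1}, so the entire task reduces to producing a closed-form CDF for $U=V^2$ where $V=\sqrt{\beta_0\beta_1(1-\zeta)}\sum_{n=1}^{N}|H_{BI,m,n}||h_{IU,n}|$. I would attack this in three stages: (i) compute the first two moments of a single product $|H_{BI,m,n}||h_{IU,n}|$; (ii) invoke the CLT on the sum of $N$ i.i.d.\ copies to approximate $V$ as Gaussian; (iii) convert the Gaussian CDF of $V$ into the CDF of $U=V^2$ and substitute back.

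For stage~(i), I would use the standard result that for a unit-power Rician magnitude $R$ with $K$-factor $K$, one has $E[R]=\tfrac{1}{2}\sqrt{\pi/(K+1)}\,L_{1/2}(-K)$ and $E[R^2]=1$, with $L_{1/2}$ the Laguerre polynomial of order $\tfrac{1}{2}$. Because $|H_{BI,m,n}|$ and $|h_{IU,n}|$ are independent Rician with factors $K_0$ and $K_1$, their product has
\begin{equation}
E[|H_{BI,m,n}||h_{IU,n}|]=\tfrac{\pi}{4\sqrt{(K_0+1)(K_1+1)}}\,L_{1/2}(-K_0)L_{1/2}(-K_1),
\end{equation}
and variance $1-\tfrac{\pi^2}{16(K_0+1)(K_1+1)}(L_{1/2}(-K_0))^2(L_{1/2}(-K_1))^2$. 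Multiplying the first moment by $N$ and the pre-factor $\sqrt{\beta_0\beta_1(1-\zeta)}$, and using $\beta_k=c_0 d_k^{-\alpha_k}$, yields precisely the $\mu_v$ in \eqref{eq.45}; scaling the variance by $\beta_0\beta_1(1-\zeta)\cdot N$ (the $N$ coming from i.i.d.\ summation) produces $\sigma_v^2$ in \eqref{eq.46}.

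For stage~(ii), since the $N$ products are i.i.d.\ with finite variance and $N$ is typically large (as argued in Remark~\ref{remark1} and throughout the paper), the CLT gives $V\approx\mathcal{N}(\mu_v,\sigma_v^2)$. For stage~(iii), since $U=V^2\ge 0$, for any $u\ge 0$ I would write
\begin{equation}
F_U(u)=\Pr(V^2\le u)=\Pr(-\sqrt{u}\le V\le\sqrt{u})=\Phi\!\left(\tfrac{\sqrt{u}-\mu_v}{\sigma_v}\right)-\Phi\!\left(\tfrac{-\sqrt{u}-\mu_v}{\sigma_v}\right),
\end{equation}
and then convert to $Q$-functions via $\Phi(x)=1-Q(x)$ and $Q(-x)=1-Q(x)$ to obtain $F_U(u)=Q\!\left(\tfrac{\mu_v-\sqrt{u}}{\sigma_v}\right)-Q\!\left(\tfrac{\mu_v+\sqrt{u}}{\sigma_v}\right)$. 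Substituting $u=\tfrac{b_1\gamma_{th}}{M\gamma_0 q(i)}$ identifies this expression with $T(q(i))$, and inserting it back into \eqref{eq.43} delivers \eqref{eq.44c}.

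The main obstacle is stage~(i): writing down the mean of the Rician magnitude in the exact Laguerre-polynomial form used by the theorem statement and carefully propagating the square root of the path-loss product and the $(1-\zeta)$ factor so the algebra matches \eqref{eq.45}–\eqref{eq.46} coefficient for coefficient. A secondary subtlety is that the Gaussian surrogate for $V$ in principle places mass on $\{V<0\}$, which is physically impossible; this shows up harmlessly in stage~(iii) because the two-sided event $\{-\sqrt{u}\le V\le\sqrt{u}\}$ automatically absorbs the lower tail, so no truncation correction is needed provided $\mu_v/\sigma_v$ is large enough that the left tail is negligible, which is standard for the regime of interest. Everything else is mechanical substitution.
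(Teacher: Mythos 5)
Your proposal is correct and follows essentially the same route as the paper: approximate $V$ as Gaussian via the CLT, obtain $F_U(u)=Q\bigl(\frac{\mu_v-\sqrt{u}}{\sigma_v}\bigr)-Q\bigl(\frac{\mu_v+\sqrt{u}}{\sigma_v}\bigr)$ from $U=V^2$, and substitute into \eqref{eq.43}. Your stage~(i), deriving $\mu_v$ and $\sigma_v^2$ from the unit-power Rician magnitude moments, is a correct filling-in of a step the paper's proof leaves implicit, and it matches \eqref{eq.45}--\eqref{eq.46} exactly.
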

\begin{proof}
Given that $U=V^2$, the CDF of $U$ is denoted as
\begin{equation}
\begin{aligned}
F_U(u) &= \text{Pr}(U\leq u)\\
&=F_V(\sqrt{u})-F_V(-\sqrt{u}),
\end{aligned}
\label{eq.47}
\end{equation}
Based on CLT, when $N$ is sufficiently large, $V\sim\mathcal{N}(\mu_v,\sigma_v^2)$. Hence, the CDF of $V$ is calculated as
\begin{equation}
F_V(v) = Q(\frac{\mu_v-v}{\sigma_v}),
\label{eq.48}
\end{equation}
where $\mu_v$ and $\sigma_v$ are the mean and standard deviation of $V$. From (\ref{eq.47}) and (\ref{eq.48}), the CDF of $U$ is represented as:
\begin{equation}
F_U(u) = Q(\frac{\mu_v-\sqrt{u}}{\sigma_v})-Q(\frac{\mu_v+\sqrt{u}}{\sigma_v}).
\label{eq.49}
\end{equation}
Finally, by substituting (\ref{eq.49}) into (\ref{eq.43}), the OP can be obtained.
\end{proof}

Considering the Gamma distribution for $V$, we can derive the closed-form OP for the Passive-IRS given in Theorem 3.
\begin{theorem}
In the case of Gamma approximation, the OP of the Passive-IRS system can be provided as follows
\begin{equation}
P_{\text{out}}=\sum_{i=1}^{J}p(i)(\frac{\gamma(\Lambda,\sqrt{\frac{b_1\gamma_{th}}{M\gamma_0q(i)}}/\Omega)}{\Gamma(\Lambda)}),
\label{eq.57a}
\end{equation}
where $\Lambda=\frac{\mu_v^2}{\sigma_v^2}$ and $\Omega=\frac{\sigma_v^2}{\mu_v}$. $\mu_v$ and $\sigma_v^2$ are provided in (\ref{eq.45}) and (\ref{eq.46}), respectively. Here, $\gamma(.)$ represents the lower incomplete gamma function and $\Gamma(.)$ denotes the complete gamma function.
\end{theorem}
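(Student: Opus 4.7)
The plan is to reuse the same decomposition of the outage probability developed for Theorem 2, but replace the Gaussian CLT fit for $V$ with a two-parameter Gamma fit obtained by moment matching. Concretely, I start from the conditional form
\begin{equation}
P_{\text{out}}=\sum_{i=1}^{J}p(i)\,F_U\!\left(\tfrac{b_1\gamma_{th}}{M\gamma_0 q(i)}\right),
\end{equation}
which was already established in (\ref{eq.43}) directly from Theorem~\ref{theorem1} and is independent of which approximation is used for $V$. The remaining task is to obtain a tractable closed-form for the CDF $F_U$ under a Gamma model for $V$.

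First I would model $V\sim\mathrm{Gamma}(\Lambda,\Omega)$ and pin down the parameters by matching the first two moments $\mathbb{E}\{V\}=\Lambda\Omega$ and $\mathrm{Var}\{V\}=\Lambda\Omega^{2}$ to the mean $\mu_v$ and variance $\sigma_v^{2}$ computed in (\ref{eq.45})–(\ref{eq.46}) for the Rician-times-Rician sum, yielding $\Lambda=\mu_v^{2}/\sigma_v^{2}$ and $\Omega=\sigma_v^{2}/\mu_v$. The CDF of $V$ is then the standard regularized lower incomplete Gamma function $F_V(v)=\gamma(\Lambda,v/\Omega)/\Gamma(\Lambda)$.

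Second, I would pass from $V$ to $U=V^{2}$. Because $V$ is defined as a non-negative sum of products of magnitudes, $V\ge 0$ almost surely, so the general relation $F_U(u)=F_V(\sqrt{u})-F_V(-\sqrt{u})$ from (\ref{eq.47}) collapses to $F_U(u)=F_V(\sqrt{u})$. Substituting and then inserting into the sum above directly yields the claimed expression (\ref{eq.57a}). No further manipulation is required, since $p(i)$ and $q(i)$ are inherited verbatim from Theorem~\ref{theorem1}.

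The delicate point, and the one I would emphasize rather than grind through, is the validity of the Gamma moment-matching itself: the summands $|H_{BI,m,n}||h_{IU,n}|$ are products of correlated Rician magnitudes, so $V$ is not exactly Gamma. The justification is that a Gamma distribution parameterized by the first two moments captures the bulk and the skewness of such positive sums better than a Gaussian for moderate $N$, which is precisely the regime where the CLT form of Theorem~2 loses accuracy. The computations of $\mu_v$ and $\sigma_v^{2}$ involving the Laguerre-function expressions $L_{1/2}(-K_0)$ and $L_{1/2}(-K_1)$ are the same moments already derived for Theorem~2, so they can be cited rather than recomputed, and the remaining algebra is purely a substitution.
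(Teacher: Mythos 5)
Your proposal is correct and follows essentially the same route as the paper: start from the mixture form (\ref{eq.43}), Gamma-fit $V$ by matching $\mu_v$ and $\sigma_v^2$ so that $\Lambda=\mu_v^2/\sigma_v^2$ and $\Omega=\sigma_v^2/\mu_v$, and obtain $F_U(u)=\gamma(\Lambda,\sqrt{u}/\Omega)/\Gamma(\Lambda)$ before substituting back. The only difference is that you derive this CDF of $U=V^2$ explicitly (using $V\ge 0$ so that $F_U(u)=F_V(\sqrt{u})$), whereas the paper simply cites it from a reference; your added step is correct and harmless.
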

\begin{proof}
The CDF of $U$ is obtained as \cite{li2023irs}
\begin{equation}
F_U(u) = \frac{\gamma(\Lambda,\sqrt{u}/\Omega)}{\Gamma(\Lambda)},
\label{eq.51}
\end{equation}
where $\Lambda=\frac{\mathbb{E}^2(V)}{\text{var}(V)}$ and $\Omega=\frac{\text{var}(V)}{\mathbb{E}(V)}$.
By substituting (\ref{eq.51}) into (\ref{eq.43}), the OP is derived.
\end{proof}

\begin{remark}
For a large number of passive IRS elements ($N \rightarrow \infty$), the scaling $\mathbb{G} \propto 1/N^2$ is counteracted by $U \propto N^2$. Consequently, these scaling laws cancel, rendering the maximum SNR, $\gamma_{\text{max}} = \frac{M\gamma_0 \mathbb{G} U}{b_1}$, independent of $N$ under perfect CSI. As a result, the maximum SNR $\gamma_{\text{max}}$ saturates at a fixed value, independent of $N$. The OP thus converges to a constant floor dictated by UAV fluctuations, resulting in a diversity order of zero. This is in contrast to the case without fluctuations, which achieves an infinite diversity order under perfect CSI. Under imperfect CSI, the OP for both scenarios (with and without fluctuations) converges to a constant floor dictated by the CSI error factor.
\label{remark3}
\end{remark}

\subsection{Active-IRS}
The instantaneous maximum SNR of the Active-IRS system can be obtained by rewriting (\ref{eq.12c}) as
\begin{equation}
\gamma_{\text{max}}=\frac{M\gamma_0\mathbb{G}U}{c_1Z_0+c_2Z_1+c_3},
\label{eq.63}
\end{equation}
where $\gamma_0$ and $U$ defined in (\ref{eq.41}). Here, $c_1=\frac{\sigma_f^2}{\sigma_n^2}\cos^3(\theta_r)$, $c_2=\frac{P_t}{P_F}(1+\frac{P_t\zeta \sigma_e^2}{\sigma_n^2})\cos^3(\theta_t)$, $c_3=\frac{N\sigma_f^2}{P_F}(1+\frac{P_t\zeta \sigma_e^2}{\sigma_n^2})$, $Z_0 = \beta_1\sum_{n=1}^{N}|h_{IU,n}|^2$, and $Z_1=\beta_0\sum_{n=1}^{N}|H_{BI,m,n}|^2$.  Under CLT approximation, $Z_k,~k\in\{0,1\}$ follows Gaussian distribution, specifically $Z_k \sim \mathcal{N}(\mu_{z_k},\sigma^2_{z_k})$ with mean and variance as follows
\begin{subequations}
\begin{align}
\mu_{z_k}&=\frac{Nc_0}{d_k^{\alpha_k}},\\
\sigma^2_{z_k}&=\frac{Nc_0}{d_k^{2\alpha_k}}(\frac{(K_k^2+4K_k+2)}{(K_k+1)^2}-1),~\forall k.
\end{align}
\end{subequations}

\begin{theorem}
In the case of CLT approximation, the OP of the Active-IRS system can be obtained as
\begin{equation}
P_{\text{out}} = \sum_{i=1}^{J}p(i)R(q(i)),
\label{eq.65}
\end{equation}
where
\begin{equation}
R(x)= Q(\frac{\mu_v-\sqrt{\mu_w(x)}}{\sigma_{v}\sqrt{1-\rho^2}})-Q(\frac{\mu_{v}+\sqrt{\mu_w(x)}}{\sigma_{v}\sqrt{1-\rho^2}}),
\end{equation}
\begin{equation}
\mu_w(x)=\frac{\gamma_{\text{th}}}{M\gamma_0x}(c_1\mu_{z_0} + c_2\mu_{z1}+c_3),
\end{equation}
\begin{equation}
\rho = \frac{\mu_{vz}-\mu_v(c_1\mu_{z_0} + c_2\mu_{z1})}{\sigma_v\sqrt{c_1^2\sigma^2_{z_0} + c_2^2\sigma^2_{z1}}},
\end{equation}
\begin{equation}
\mu_{vz} = \sqrt{\beta_0\beta_1(1-\zeta)}(c_1\beta_1A_1 + c_2\beta_0A_0),
\end{equation}
\begin{equation}
\begin{aligned}
&A_p=\frac{3\pi N}{8(1+K_p)^{\frac{3}{2}}(1+K_{\bar{p}})^{\frac{1}{2}}}L_{\frac{3}{2}}(-K_p)L_{\frac{1}{2}}(-K_{\bar{p}})\\
&+\frac{\pi N(N-1)}{4(1+K_0)^{\frac{1}{2}}(1+K_{1})^{\frac{1}{2}}}L_{\frac{1}{2}}(-K_0)L_{\frac{1}{2}}(-K_{1}),\quad p\in\{0,1\}.
\end{aligned}
\end{equation}
\end{theorem}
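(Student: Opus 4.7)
The plan is to mirror the conditioning strategy used in the passive-IRS proof while carefully handling the fact that the denominator of (\ref{eq.63}) is now random and statistically coupled with the numerator $V$. Conditioning on $\mathbb{G}$ and applying Theorem~\ref{theorem1} reduces the OP to the discrete mixture $\sum_{i=1}^{J} p(i)\, R(q(i))$, so the principal task is to derive the conditional outage $R(q(i))$ on $\{\mathbb{G}=q(i)\}$.

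Define $W \triangleq c_1 Z_0 + c_2 Z_1$. Given $\mathbb{G}=q(i)$, the event $\gamma_{\max}\leq\gamma_{\text{th}}$ is equivalent to $V^2 \leq \gamma_{\text{th}}(W + c_3)/(M\gamma_0 q(i))$. Since $V$, $Z_0$, and $Z_1$ are each sums of $N$ i.i.d. terms in the Rician magnitudes $|H_{BI,m,n}|$ and $|h_{IU,n}|$, the CLT renders $(V,W)$ jointly Gaussian for large $N$ with $V\sim\mathcal{N}(\mu_v,\sigma_v^2)$ and $W\sim\mathcal{N}(\mu_W,\sigma_W^2)$, where $\mu_W=c_1\mu_{z_0}+c_2\mu_{z_1}$ and, by the independence of $\{H_{BI,m,n}\}$ from $\{h_{IU,n}\}$, $\sigma_W^2=c_1^2\sigma_{z_0}^2+c_2^2\sigma_{z_1}^2$. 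The correlation coefficient is $\rho=(\mu_{vz}-\mu_v\mu_W)/(\sigma_v\sigma_W)$ with $\mu_{vz}\triangleq\mathbb{E}\{VW\}$.

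The two approximation steps that yield the stated closed form are: (i) the denominator $W+c_3$ concentrates at its mean at rate $1/\sqrt{N}$, so the random threshold on the right-hand side is replaced by the deterministic $\mu_w(q(i))=\gamma_{\text{th}}(\mu_W+c_3)/(M\gamma_0 q(i))$; (ii) the residual statistical coupling with $V$ is preserved by using the conditional law $V\mid\{W=\mu_W\}\sim\mathcal{N}\!\left(\mu_v,\sigma_v^2(1-\rho^2)\right)$, whose conditional mean reduces to $\mu_v$ because we freeze $W$ exactly at $\mu_W$. Applying the identity (\ref{eq.47}) to the resulting event $V^2\leq\mu_w(q(i))$ then produces the difference of two Q-functions that defines $R(q(i))$, and inserting this into the mixture gives (\ref{eq.65}).

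The main obstacle is the explicit evaluation of $\mu_{vz}=c_1\mathbb{E}\{VZ_0\}+c_2\mathbb{E}\{VZ_1\}$, which determines $\rho$ through the coefficients $A_p$. Expanding $\mathbb{E}\{VZ_0\}$ as a double sum over reflector indices splits into a matching-index contribution ($N$ terms of the form $\mathbb{E}\{|H_{BI,m,n}|\,|h_{IU,n}|^3\}$, factorizing by channel independence) and a mismatched-index contribution ($N(N-1)$ terms that factorize into three separate magnitude moments, with $\mathbb{E}\{|H_{BI,m,n'}|^2\}=1$ collapsing one factor). Inserting the Rician-magnitude moment identity $\mathbb{E}\{|X|^s\}=\Gamma(1+s/2)(1+K)^{-s/2}L_{s/2}(-K)$ for $s\in\{1,2,3\}$ yields the $L_{3/2}L_{1/2}$ term scaled by $3\pi N/8$ and the $L_{1/2}L_{1/2}$ term scaled by $\pi N(N-1)/4$, reproducing the stated expression for $A_1$; the derivation of $A_0$ is symmetric via $\mathbb{E}\{VZ_1\}$. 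This moment bookkeeping, coupled with the conditional-Gaussian step above, is what distinguishes the active-IRS analysis from its passive counterpart and completes the derivation of (\ref{eq.65}).
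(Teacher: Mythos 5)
Your proposal is correct and follows essentially the same route as the paper: condition on $\mathbb{G}$ via Theorem~\ref{theorem1}, invoke the CLT to make $(V,Z)$ jointly Gaussian, replace the random denominator by its mean while retaining the conditional variance $\sigma_v^2(1-\rho^2)$ (the paper formalizes this freeze-at-the-mean step as a second-order Taylor expansion of $F_{U|Z}$ about $\mu_z$ whose first-order term vanishes in expectation), and evaluate $\rho$ through the cross-moment $\mu_{vz}$ with Rician-moment bookkeeping yielding the $A_p$ terms. Your explicit matching/mismatched-index computation of $\mathbb{E}\{VZ_0\}$ and $\mathbb{E}\{VZ_1\}$ actually supplies detail the paper omits (with only a harmless label slip: for $\mathbb{E}\{VZ_0\}$ the collapsing factor is $\mathbb{E}\{|h_{IU,n'}|^2\}=1$, not $\mathbb{E}\{|H_{BI,m,n'}|^2\}$).
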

\begin{proof}
Let $Z=c_1Z_0+c_2Z_1$. Since two RVs $Z_0$ and $Z_1$ are independent, the RV of $Z$ follows Gaussian distribution, i.e., $Z\sim \mathcal{N}(\mu_z,\sigma^2_z)$, which $\mu_Z = c_1\mu_{z_0}+c_2\mu_{z_1}$ and $\sigma^2_z = c_1^2\sigma^2_{z_0}+c^2_2\sigma^2_{z_1}$. Now, we can rewrite SNR in (\ref{eq.63}) as
\begin{equation}
\gamma_{\text{max}} = \frac{M\gamma_0\mathbb{G}U}{Z+c_3}.
\end{equation}
Therefore, the OP can be given as
\begin{equation}
P_{\text{out}}= \int F_{\gamma_{\text{max}}|\mathbb{G}}(\gamma_{\text{th}})f_{\mathbb{G}}(\mathbb{G})d\mathbb{G},
\label{eq.67}
\end{equation}
where $f_{\mathbb{G}}(\mathbb{G})$ is given by (\ref{eq.38}). $F_{\gamma_{max}|\mathbb{G}}(.)$ is the CDF of $\gamma_{\text{max}}$ conditioned on the radiation gain, i.e., $\mathbb{G}$ which is obtained as follows
\begin{subequations}
\begin{align}
F_{\gamma_{max}|\mathbb{G}}(\gamma_{\text{th}})&=\text{Pr}(\frac{U}{Z+c_3}\leq\frac{\gamma_{\text{th}}}{M\gamma_0\mathbb{G}})\\
&=\int\text{Pr}(\frac{U}{z+c_3}\leq\frac{\gamma_{\text{th}}}{M\gamma_0\mathbb{G}}|Z=z)f_Z(z)dz\\
&=\int F_{U|Z}(\frac{\gamma_{th}(z+c_3)}{M\gamma_0\mathbb{G}})f_Z(z)dz\\
&=\mathbb{E}_Z\{F_{U|Z}(w(\mathbb{G},z))\},
\end{align}
\label{eq.68}
\end{subequations}
where $W(\mathbb{G},z)=\frac{\gamma_{th}(z+c_3)}{M\gamma_0\mathbb{G}}$. The distribution of $W$ follows $\mathcal{N}(\mu_w(\mathbb{G}),\sigma_w^2(\mathbb{G}))$ where $\mu_w(\mathbb{G})=\frac{\gamma_{\text{th}}}{M\gamma_0\mathbb{G}}(c_1\mu_{z_0}+c_2\mu_{z_1}+c_3)$ and $\sigma^2_w(\mathbb{G}) = \frac{\gamma^2_{\text{th}}}{M^2\gamma^2_0\mathbb{G}^2}(c_1^2\sigma^2_{z_0}+c_2^2\sigma^2_{z_1})$. From (\ref{eq.49}), we have
\begin{equation}
\begin{aligned}
F_{U|Z}(w(\mathbb{G},z&)) =\\ &Q(\frac{\mu_{v|z}-\sqrt{w(\mathbb{G},z)}}{\sigma_{v|z}})-Q(\frac{\mu_{v|z}+\sqrt{w(\mathbb{G},z)}}{\sigma_{v|z}}).
\label{eq.67b}
\end{aligned}
\end{equation}
$V$ and $Z$ are jointly Gaussian with correlation coefficient $\rho$, the conditional mean $\mu_{v|z}$ and variance $\sigma^2_{v|z}$ are given by
\begin{subequations}
\begin{align}
\mu_{v|z} &= \mu_v + \rho\frac{\sigma_v}{\sigma_z}(z-\mu_z),\\
\sigma^2_{v|z} &= \sigma^2_v(1-\rho^2),
\end{align}
\end{subequations}
where $\rho = \frac{\text{Cov}(v,z)}{\sigma_v\sigma_z}$. Here, $\text{Cov}(v,z) = E\{vz\}-\mu_v\mu_z$ where $E\{vz\}$ is obtained as
\begin{equation}
E\{vz\} = \sqrt{\beta_0\beta_1(1-\zeta)}(c_1\beta_1A_1 + c_2\beta_0A_0).
\end{equation}
The terms $A_p, p\in \{0,1\}$ are calculated as
\begin{equation}
\begin{aligned}
A_p=&\frac{3\pi N}{8(1+K_p)^{\frac{3}{2}}(1+K_{\bar{p}})^{\frac{1}{2}}}L_{\frac{3}{2}}(-K_p)L_{\frac{1}{2}}(-K_{\bar{p}})\\
&+\frac{\pi N(N-1)}{4(1+K_0)^{\frac{1}{2}}(1+K_{1})^{\frac{1}{2}}}L_{\frac{1}{2}}(-K_0)L_{\frac{1}{2}}(-K_{1}).
\end{aligned}
\end{equation}
where $\bar{p}$ denotes the complement of $p$.
Using (\ref{eq.68b}), we approximate $F_{U|Z}(w(\mathbb{G},z))$ in (\ref{eq.67b}) at point $\mu_z$ as follows
\begin{equation}
\begin{aligned}
F_{U|Z}(w(\mathbb{G},z))&\approx F_{U|Z}(w(G,\mu_z))\\
&+ F_{U|Z}^{(1)}(w(G,\mu_z))(z-\mu_z)\\
&+ F_{U|Z}^{(2)}(w(G,\mu_z)) (z-\mu_z))^2.
\end{aligned}
\label{eq.69b}
\end{equation}
By substituting (\ref{eq.69b}) into (\ref{eq.68}), we have
\begin{equation}
\begin{aligned}
F_{\gamma_{max}|\mathbb{G}}(\gamma_{\text{th}})&= F_{U|Z}(\mu_w(\mathbb{G})) \\
&+ F_{U|Z}^{(1)}(\mu_w(\mathbb{G}))\mathbb{E}_Z\{(z-\mu_z)\} \\
&+ F_{U|Z}^{(2)}(\mu_w(\mathbb{G})) \mathbb{E}_Z\{(z-\mu_z)^2\}.
\end{aligned}
\label{eq.70b}
\end{equation}
Given that $\mathbb{E}_Z\{z-\mu_z\}=0$ and the third term in (\ref{eq.70b}) is negligible compared to the first for a sufficient number of IRS elements $N$, we can approximate (\ref{eq.70b}) as
\begin{equation}
F_{\gamma_{max}|\mathbb{G}}(\gamma_{\text{th}})= Q(\frac{\mu_{v}-\sqrt{\mu_w(\mathbb{G})}}{\sigma_{v|z}})-Q(\frac{\mu_{v}+\sqrt{\mu_w(\mathbb{G})}}{\sigma_{v|z}}).
\label{eq.71b}
\end{equation}
Finally, substituting (\ref{eq.71b}) into (\ref{eq.67}) yields the OP.
\end{proof}

\begin{remark}
For a large number of active IRS elements ($N \rightarrow \infty$), the scaling $\mathbb{G} \propto 1/N^2$ is compensated by $U \propto N^2$ in the numerator of $\gamma_{\text{max}}=\frac{M\gamma_0\mathbb{G}U}{Z+c_3}$. Since the terms $Z$ and $c_3$ in the denominator scale proportionally with $N$, $\gamma_{\text{max}} \propto 1/N$. Consequently, the maximum SNR decays to zero under perfect CSI, forcing the OP to converge to $1$. This behavior yields a diversity order of zero, in contrast to the infinite diversity order achievable without fluctuations under perfect CSI. When CSI is imperfect, the OP for both scenarios converges to a constant floor dictated by the CSI error factor.
\label{remark4}
\end{remark}

\begin{remark}
For high SNR ($\gamma_0 \rightarrow \infty$), the diversity order for both Passive-IRS and Active-IRS systems is limited and dictated by the UAV fluctuation level. This is in contrast to the no-fluctuations scenario, which achieves a diversity order of $N$ under perfect CSI \cite{yang2020outage}. When CSI is imperfect, the OP for both scenarios (with and without fluctuations) converges to a constant floor dictated by the CSI error factor, resulting in zero diversity order.
\label{remark5}
\end{remark}

\begin{remark}
The OP for both Passive-IRS and Active-IRS configurations yields the expression $P_{out} = \mathbf{p}^T T(\mathbf{q})$, where $T(\mathbf{q})$ denotes the element-wise application of function $T(\cdot)$ to vector $\mathbf{q}$. Computational evaluation requires: (i) generating vectors $\mathbf{p}$ and $\mathbf{q}$ through Kronecker products of component vectors of length $lD$, (ii) applying the transformation $T(\cdot)$ to each element of $\mathbf{q}$, and (iii) computing the dot product. Given that $J = l^2D^2$ is the dimension of the resulting vectors, each stage has complexity $\mathcal{O}(J)$, establishing the overall computational complexity as $\mathcal{O}(J)$.
\label{remark6}
\end{remark}
\section{Numerical results and discussions}
This section provides numerical results to validate the theoretical analysis of the UMI system. The simulation parameters are configured as follows: the BS, IRS, and UE are located at $[0,0,20]$, $[10,10,120]$, and $[40,40,0]$, respectively. The number of BS antennas is $M = 16$. The path-loss exponents for BS-IRS and IRS-UE links are configured as $\alpha_1 = 2$ and $\alpha_2 = 2.2$, respectively, with a reference path-loss of $c_0 = -30$ dB applied to all links \cite{cheng2023irs}. Additionally, both Rician factors for the BS-IRS and IRS-UE channels are set to $K_0, K_1 = 10$ dB. The thermal noise power of the user's receiver and active IRS is $\sigma_n^2=-80$ dBm and $\sigma_f^2=-70$ dBm, respectively \cite{you2021wireless}. The active IRS amplification power $P_F = 0.05P_t$ and SNR threshold $\gamma_{th} = 10$ dB. The mean and variance of UAV's angular variations along the x-axis and y-axis are the same, with $\mu_x=\mu_y=0^\circ$ and $\sigma_x=\sigma_y=1^\circ$.
To improve comprehension, we plot the CDF instead of the PDF for the UMI radiation pattern. This transition from PDF to CDF provides a better understanding of the radiation patterns and their characteristics.

Fig.~\ref{fig4} shows the CDF of radiation pattern of UMI for both Monte-Carlo simulations as well as analytical results calculated as $F_{\mathbb{G}}(\mathbb{G})=\sum_{i=1}^{J}p(i)U(\mathbb{G}-q(i))$ where $U(x)=\begin{cases} 1, & \mbox{} x\geq0 \\ 0, & \mbox{} x<0 \end{cases}$. Furthermore, Fig.~\ref{fig4} illustrates that the accuracy of the analytical results is highly dependent on the number of sectors $D$ and the number of lobes $l$ (including both main- and side-lobes). A precise alignment between simulation and theoretical results can be achieved with a sufficiently large number of sectors. For instance, the results for $D=60$ ($1 \%$ error) is markedly more accurate than for $D=15$ ($5 \%$ error), at the cost of a $16$-fold increase in complexity, as noted in Remark \ref{remark6}. Furthermore, the analysis for $N=1024$ reveals that the narrow main lobe makes the main-lobe model ($l=1$) insufficient. Consequently, considering only the main-lobe ($l=1$) is insufficient. Accuracy is only achieved when the first side-lobe is included ($l=2$). In contrast, for other array sizes with wider main lobes, considering only the main lobe ($l=1$) provides an accurate approximation.
\begin{figure}[!t]
\centering
\includegraphics[width=3.5in]{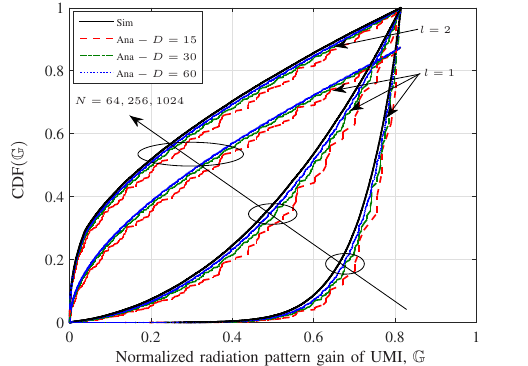}
\caption{The CDF of the radiation pattern of UMI.}
\label{fig4}
\end{figure}

The performance of the UMI system is assessed using the OP as a key performance metric. To analyze the influence of the IRS pattern and the UAV's angular vibrations on system performance, we compare scenarios without vibrations (marked as 'w/o vib' in legends) against those with vibrations (marked as 'vib' in legends). Our analysis reveals a direct trade-off between performance and computational cost governed by the number of sectors $D$. A larger $D$ yields a finer-grained approximation, which consistently improves the approximation accuracy (as measured by the computed OP), at the expense of increased computational complexity, which scales as $O(J)$ where $J = l^2D^2$ as derived in Remark \ref{remark6}.

Fig.~\ref{fig5} shows the OP of the Passive-IRS under CLT and Gamma assumptions, plotted as a function of $P_t$ for varying values of the number of IRS elements $N$ and CSI error factor $\zeta$. The simulation results confirm the accuracy of the derived analytical expressions. As shown in Fig.~\ref{fig5}, the OP decreases with increasing the number of IRS elements $N$ when the UAV experiences no fluctuations. Conversely, when the UAV encounters some fluctuations, the OP increases because of misalignment, as noted in Remark \ref{remark1}. From Fig.~\ref{fig5}, it is clear that a smaller value of elements ($N = 64$) results in improved performance in the high $P_t$ values, while a larger value of elements $N=256$ improves performance at low $P_t$ values. This behavior occurs because, at low $P_t$ values, the weaker IRS directivity (with $N=64$) cannot effectively compensate for the path loss. On the other hand, in high $P_t$ values, the OP for a narrow beam is constrained by the orientation fluctuations of the UAV, as noted in Remark \ref{remark5}. In such cases, a wider IRS pattern is necessary to mitigate the effects of UAV orientation changes. Moreover, the analytical outcome under the CLT and Gamma approximation closely aligns because they are equivalent for large $N$. The number of sectors $D$ in modeling the IRS radiation pattern impact on final results. Specifically, a larger $D=60$ ($2\%$ error) yields a more accurate model compared to a smaller $D=15$ ($8\%$ error). Furthermore, system performance degrades under imperfect CSI ($\zeta=0.1$) compared to the perfect CSI scenario ($\zeta=0$). This effect is more pronounced for a large number of elements, for instance $N = 256$, because the variance of the CSI error is proportional to the cascaded channel gain, which itself scales with $N^2$.
\begin{figure}[!t]
\centering
\includegraphics[width=3.5in]{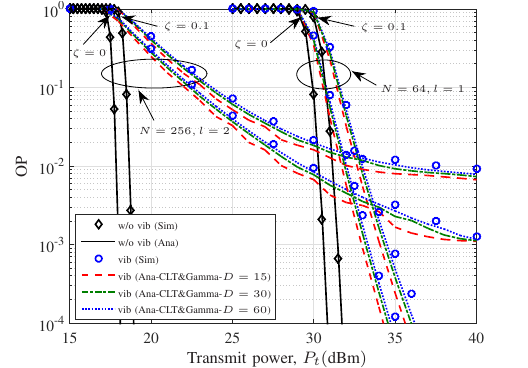}
\caption{The OP of Passive-IRS systems versus transmit power, $P_t$ under CLT and Gamma approximation.}
\label{fig5}
\end{figure}

Fig.~\ref{fig6} illustrates the OP of the Active-IRS system as a function of transmit power, $P_t$, using CLT approximation for varying values of the number of IRS elements $N$ and CSI error factor $\zeta$. Based on our simulation results, we can confirm the accuracy of the derived analytical expression. Consequently, the system utilizing active IRS employs fewer elements due to the amplification capability. As shown in Fig.~\ref{fig6}, the system performance can be maintained relatively stable with a low number of elements, even in the presence of vibrations. This stability can be attributed to the wider IRS pattern associated with fewer elements, which helps mitigate the impact of UAV fluctuations. Additionally, for a given number of IRS elements, the Active-IRS system requires less transmit power than a Passive IRS to achieve the same performance. Alternatively, for a fixed transmit power, the Active-IRS system can achieve comparable performance with fewer elements. Since fewer elements have a wider pattern, this configuration is more robust to UAV fluctuations. From Fig.~\ref{fig6}, we can conclude that when UAVs experience fluctuations, employing active components with fewer elements is beneficial for minimizing the impact of these fluctuations. However, this approach may require additional power consumption. The parameter $D$, which defines the number of sectors in modeling the IRS radiation pattern gain $\mathbb{G}$, impacts the final results. Specifically, a larger $D=60$ ($3\%$ error) yields a more accurate model compared to a smaller $D=15$ ($10 \%$ error). Moreover, the system performance of Active-IRS, like Passive IRS, degrades under imperfect CSI ($\zeta=0.1$) compared to the perfect CSI scenario ($\zeta=0$). This effect is more pronounced for a large number of elements (e.g., $N = 256$) because the CSI error variance scales with $N$ for active IRS but with $N^2$ for passive IRS, underscoring the superiority of the active IRS configuration under imperfect CSI.
\begin{figure}[!t]
\centering
\includegraphics[width=3.5in]{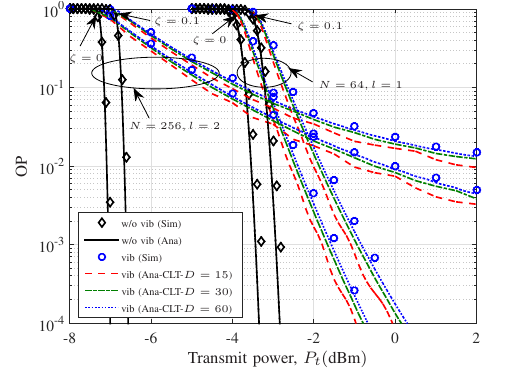}
\caption{The OP of the Active-IRS system versus transmit power, $P_t$, under CLT approximation.}
\label{fig6}
\end{figure}

Since weather conditions change continuously during the day, the angular fluctuations induced by UAVs also change over time.
We propose designing the IRS with the maximum number of elements, denoted as $N_{\text{max}}$, to ensure reliable transmission for the UMI system. As angular fluctuations vary dynamically throughout the day, only a subset of $N$ elements from $N_{\text{max}}$ is utilized. This adaptive approach aims to minimize the OP. Consequently, the optimization problem can be formulated as follows
\begin{equation}
\begin{aligned}
\min_{N} \quad & P_{\text{out}}(N,\mu_x,\mu_y,\sigma_x,\sigma_y)\\
\textrm{s.t.} \quad & 1\leq N\leq N_{\text{max}}.
\end{aligned}
\label{eq.84c}
\end{equation}
The optimal number of reflecting elements along with corresponding minimum achievable outage probabilities can be determined for varying values of $\mu_x$, $\mu_y$, $\sigma_x$, and $\sigma_y$. By analyzing the performance metrics based on these parameters, we can identify the configurations that yield the best reliability for the communication link in the UMI system. It is important to note that, based on our derived analytical expressions, the above optimization problem can be solved efficiently without requiring extensive computational time.

Fig.~\ref{fig7} shows the OP for both passive and active IRS configurations under perfect ($\zeta=0$) and imperfect ($\zeta=0.1$) CSI as a function of the number of elements, $N$. In this analysis, the transmit power is set to $30$ dBm for the passive scenario and $0$ dBm for the active scenario, with $N_{\text{max}}=400$. As illustrated in Fig.~\ref{fig7}, the optimal number of IRS elements for each scenario is determined using our derived analytical expressions. The results show optimal values of $N_{\text{opt}} = 144$ for the Passive-IRS and $N_{\text{opt}} = 49$ for the Active-IRS system. While the Passive-IRS with its optimal configuration achieves the OP of $8\times 10^{-3}$ under imperfect CSI, the Active-IRS achieves a remarkably lower OP of $1\times 10^{-4}$. This significant difference highlights the advantages of using active components in the IRS, particularly in enhancing communication reliability under dynamic conditions. Furthermore, Fig.~\ref{fig7} reveals that the value of $N_{\text{opt}}$ is identical for $D=15$ and $D=30$, emphasizing that a finer sectorization beyond $D=15$ is not necessary for this specific optimization. Additionally, while imperfect CSI ($\zeta=0.1$) leads to a higher overall OP, it does not affect the optimal number of elements; it only shifts the OP performance curve. As shown in Fig.~\ref{fig7}, the diversity order for both IRS systems is zero in the presence of UAV fluctuations, whereas an infinite diversity order is achieved in the absence of fluctuations, even with perfect CSI, as analytically noted in Remarks \ref{remark3} and \ref{remark4}.
\begin{figure}[!t]
\centering
\includegraphics[width=3.5in]{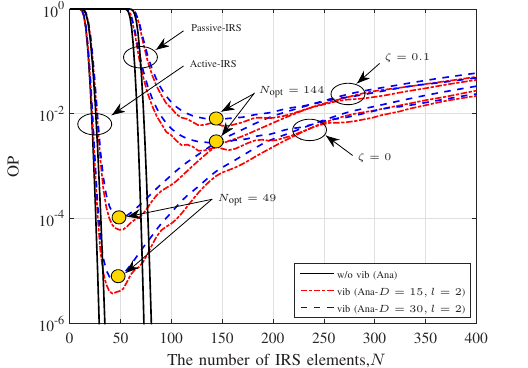}
\caption{The OP of all scenarios versus the number of IRS elements, $N$: finding the optimal number of IRS elements by solving the optimization problem in (\ref{eq.84c}).}
\label{fig7}
\end{figure}

\section{Conclusion}
This study investigated the performance of an IRS mounted on a UAV, characterizing the IRS 3D pattern while accounting for random angular fluctuations of the UAV. We derived a closed-form PDF for the 3D IRS pattern to facilitate performance analysis and validated the accuracy of our analytical models using Monte Carlo simulations. The results demonstrated that, unlike terrestrial IRS systems, aerial IRS performance cannot be enhanced indefinitely by increasing the number of elements due to UAV instabilities. Additionally, simulations revealed that optimal system performance is achieved by selecting the optimal number of elements and incorporating active components. Building upon the foundational understanding established in this work, our future research will extend this framework to a multi-user scenario.


\begin{appendices}
\section{Proof of Lemma 1}
\label{appendix.a}
From (\ref{eq.1}) and (\ref{eq.2}), we rewrite $\varepsilon_{\theta_q}$ as
\begin{equation}
\begin{aligned}
\varepsilon_{\theta_q}(\varepsilon_x,\varepsilon_y) &= \tan^{-1}(\sqrt{\tan^2(\theta_{qx}+\varepsilon_x)+\tan^2(\theta_{qy}+\varepsilon_y)})\\
&-\tan^{-1}(\sqrt{\tan^2(\theta_{qx})+\tan^2(\theta_{qy})}).
\end{aligned}
\label{eq.13}
\end{equation}
For a function of two variables $f(x,y)$ with first-order partial derivatives at a point $(a,b)$, the  first-order Taylor polynomial near $(a,b)$ is given by
\begin{equation}
f(x,y)\approx f(a,b) + f_x(a,b)(x-a) + f_y(a,b)(y-b).
\label{eq.13a}
\end{equation}
Based on (\ref{eq.13a}), we approximate $\varepsilon_{\theta_q}(\varepsilon_x,\varepsilon_t)$ near $(0,0)$ as:
\begin{equation}
\varepsilon_{\theta_q}(\varepsilon_x,\varepsilon_y) \approx A_{\theta_{qx}}\varepsilon_x + A_{\theta_{qy}}\varepsilon_y,
\label{eq.14}
\end{equation}
where $A_{\theta_{qx}}$ and $A_{\theta_{qy}}$ can be expressed as
\begin{equation}
\begin{aligned}
A_{\theta_{qk}}=\frac{(1+\tan^2\theta_{qk})\tan\theta_{qk}}{\sqrt{\tan^2\theta_{qx}+\tan^2\theta_{qy}}(1+\tan^2\theta_{qx}+\tan^2\theta_{qy})},&\\
k\in\{x,y\}.&
\end{aligned}
\end{equation}
From (\ref{eq.14}), since $\varepsilon_{\theta_{q}}$ is the sum of two independent Gaussian RVs, i.e. $\varepsilon_{x}$ and $\varepsilon_{y}$, the distribution of $\varepsilon_{\theta_{q}}$ follows $\varepsilon_{\theta_{q}}\sim \mathcal{N}(\mu_{\varepsilon_{\theta_q}},\sigma^2_{\varepsilon_{\theta_q}})$ where $\mu_{\varepsilon_{\theta_q}}$ and $\sigma^2_{\varepsilon_{\theta_q}}$ are obtained as
\begin{subequations}
\begin{align}
&\mu_{\varepsilon_{\theta_q}} = A_{\theta_{qx}}\mu_x+A_{\theta_{qy}}\mu_y,\\
&\sigma^2_{\varepsilon_{\theta_q}}=A^2_{\theta_{qx}}\sigma^2_x+A^2_{\theta_{qy}}\sigma^2_y.
\end{align}
\end{subequations}

\section{Proof of Lemma 3}
\label{appendix.b}

The RVs $Z_x$ and $Z_y$, which are dependent on UMI fluctuations, follow a Gaussian distribution. From (\ref{eq.28}), the value of sector $i$, defined by $\frac{2i}{DN_q}<|Z_x|\leq\frac{2(i+1)}{DN_q}$, is $\frac{D^2(1-cos(\frac{2\pi i}{D}))}{2\pi^2i^2}$. The probability of $Z_x$ within this sector, $\text{Pr}(\frac{2i}{DN_q}<|Z_x|\leq\frac{2(i+1)}{DN_q})$ is calculated using the CDF of Gaussian-distributed $Z_x$, which can be expressed as $Q$-function. Consequently, the PDF of $g_{ax}$ will be given by
\begin{equation}
f_{g_{ax}}=\sum_{i=0}^{lD-1}{p_{ax}(i)\times\delta(g_{ax}-\frac{D^2(1-cos(\frac{2\pi i}{D}))}{2\pi^2i^2})},
\label{eq.85b}
\end{equation}
where
\begin{equation}
\begin{aligned}
p_{ax}(i)=Q(\frac{2i-DN_q\mu_{Z_x}}{DN_q\sigma_{Z_x}})&-Q(\frac{2(i+1)+DN_q\mu_{Z_x}}{DN_q\sigma_{Z_x}})\\
\qquad+Q(\frac{2i+DN_q\mu_{Z_x}}{DN_q\sigma_{Z_x}})&-Q(\frac{2(i+1)-DN_q\mu_{Z_x}}{DN_q\sigma_{Z_x}}).
\end{aligned}
\label{eq.86b}
\end{equation}
Note that the PDF of RV $g_{ay}$ can be derived similarly to (\ref{eq.85b}) by replacing the subscript $x$ with $y$. According to (\ref{eq.21}), $\mathbb{G}_a=g_{ax}\times g_{ay}$ and applying (\ref{eq.85b}), the PDF of array factor, $\mathbb{G}_a$ conditioned on the array factor in x-axis, $g_{ax}$ can be given as
\begin{equation}
\begin{aligned}
&f_{\mathbb{G}_a|g_{ax}}(\mathbb{G}_a) =\\ &\qquad\quad\sum_{i=0}^{lD-1}\frac{p_{ay}(i)}{g_{ax}}\times\delta(\frac{\mathbb{G}_a}{g_{ax}}-\frac{D^2(1-cos(\frac{2\pi i}{D}))}{2\pi^2i^2}),
\end{aligned}
\label{eq.87b}
\end{equation}
By using (\ref{eq.85b}) and (\ref{eq.87b}), the PDF of array factor, $f_{\mathbb{G}_a}(\mathbb{G}_a)$ is derived as
\begin{equation}
\begin{aligned}
f_{\mathbb{G}_a}(\mathbb{G}_a)&=\int f_{\mathbb{G}_a|g_{ax}}(\mathbb{G}_a) f_{g_{ax}} dg_{ax}\\
&=\sum_{i=0}^{lD-1}\sum_{j=0}^{lD_a-1}{\int \frac{p_{ax}(i)p_{ay}(j)}{g_{ax}}}\\
&\quad\times\delta(\frac{\mathbb{G}_a}{g_{ax}}-\frac{D^2(1-\cos(\frac{2\pi i}{D}))}{2\pi^2i^2})\\
&\quad\times{\delta(g_{ax}-\frac{D^2(1-\cos(\frac{2\pi j}{D}))}{2\pi^2j^2})}dg_{ax}\\
&=\sum_{i=0}^{lD-1}\sum_{j=0}^{lD-1}{p_{ax}(i)p_{ay}(j)\delta(\mathbb{G}_a-q_{ax}(i)q_{ay}(j)))},
\end{aligned}
\label{eq.88b}
\end{equation}
where
\begin{equation}
q_{ak}(i)=\frac{D^2(1-\cos(\frac{2\pi i}{D}))}{2\pi^2i^2},\quad k\in \{x,y\}.
\label{eq.89b}
\end{equation}
We can rewrite result in (\ref{eq.88b}) as
\begin{equation}
f_{\mathbb{G}_a}(\mathbb{G}_a)=\sum_{i=1}^{l^2D^2}p_a(i)\delta(\mathbb{G}_a-q_a(i)),
\end{equation}
where $\mathbf{q}_a = \mathbf{q}_{ax}\otimes \mathbf{q}_{ay}$ and $\mathbf{p}_a = \mathbf{p}_{ax}\otimes \mathbf{p}_{ay}$. Here, $\mathbf{q}_{ak}\in \mathbb{R}^{lD\times1}$ and $\mathbf{p}_{ak}\in \mathbb{R}^{lD\times1}$ where $k\in\{x,y\}$. $p_{ak}(i)$ and $q_{ak}(i)$ are given in (\ref{eq.86b}) and (\ref{eq.89b}), respectively.

\end{appendices}


\bibliographystyle{IEEEtran}
\bibliography{mybib}


\vfill

\end{document}